\documentclass[english,journal]{extarticle}
\usepackage{fontenc}
\usepackage[utf8]{inputenc}
\usepackage{color}
\usepackage{subfigure}
\usepackage{graphicx}
\usepackage{babel}
\usepackage{float}
\usepackage{amsthm}
\usepackage{amsmath}
\usepackage{amssymb}
\usepackage{tikz}
\usepackage{fixltx2e}
\usepackage{multirow}
\usepackage{anysize}
\usepackage{parskip}
\usepackage{xargs}[2008/03/08]
\usepackage[unicode=true,pdfusetitle,
 bookmarks=true,bookmarksnumbered=false,bookmarksopen=false,
 breaklinks=false,pdfborder={0 0 0},backref=false,colorlinks=true]
 {hyperref}
\marginsize{1in}{1in}{1in}{1in}

\makeatletter

%
%
%


\usepackage{xy}
\xyoption{matrix}
\xyoption{frame}
\xyoption{arrow}
\xyoption{arc}

\usepackage{ifpdf}
\ifpdf
\else
\PackageWarningNoLine{Qcircuit}{Qcircuit is loading in Postscript mode.  The Xy-pic options ps and dvips will be loaded.  If you wish to use other Postscript drivers for Xy-pic, you must modify the code in Qcircuit.tex}
\xyoption{ps}
\xyoption{dvips}
\fi

\entrymodifiers={!C\entrybox}

\newcommand{\ket}[1]{{\left\vert{#1}\right\rangle}}
\newcommand{\qw}[1][-1]{\ar @{-} [0,#1]}
\newcommand{\qwx}[1][-1]{\ar @{-} [#1,0]}


\newcommand{\gate}[1]{*+<.6em>{#1} \POS ="i","i"+UR;"i"+UL **\dir{-};"i"+DL **\dir{-};"i"+DR **\dir{-};"i"+UR **\dir{-},"i" \qw}






\newcommand{\control}{*!<0em,.025em>-=-<.2em>{\bullet}}

\newcommand{\ctrl}[1]{\control \qwx[#1] \qw}

\newcommand{\targ}{*+<.02em,.02em>{\xy ="i","i"-<.39em,0em>;"i"+<.39em,0em> **\dir{-}, "i"-<0em,.39em>;"i"+<0em,.39em> **\dir{-},"i"*\xycircle<.4em>{} \endxy} \qw}

\newcommand{\multigate}[2]{*+<1em,.9em>{\hphantom{#2}} \POS [0,0]="i",[0,0].[#1,0]="e",!C *{#2},"e"+UR;"e"+UL **\dir{-};"e"+DL **\dir{-};"e"+DR **\dir{-};"e"+UR **\dir{-},"i" \qw}
\newcommand{\ghost}[1]{*+<1em,.9em>{\hphantom{#1}} \qw}



\newcommand{\lstick}[1]{*!R!<.5em,0em>=<0em>{#1}}


\newcommand{\Qcircuit}{\xymatrix @*=<0em>}



\theoremstyle{plain}

\theoremstyle{definition}
\newtheorem{defin}{Definition}

\theoremstyle{remark}

\newtheorem{lemm}{Lemma}
\newenvironment{lemma}{\vspace{0mm}\begin{lemm}}{\end{lemm}}
\renewenvironment{proof}{\noindent{\bf Proof:}\;}{$\square$\,}

\begin{document}

\title{Basic circuit compilation techniques for an ion-trap quantum machine}

\author{\small{ Dmitri Maslov$^{1,2}$} \\
{\small\it $^1$ National Science Foundation, Arlington, VA, USA} \\
{\small\it $^2$ QuICS, University of Maryland, College Park, MD, USA} \\
{\small\tt \href{mailto:dmitri.maslov@gmail.com}{dmitri.maslov@gmail.com}}\\
}

\maketitle

\begin{abstract}
We study the problem of compilation of quantum algorithms into optimized physical-level circuits executable in a quantum information processing (QIP) experiment based on trapped atomic ions. We report a complete strategy: starting with an algorithm in the form of a quantum computer program, we compile it into a high-level logical circuit that goes through multiple stages of decomposition into progressively lower-level circuits until we reach the physical execution-level specification.  We skip the fault-tolerance layer, as it is not within the scope of this work.  The different stages are structured so as to best assist with the overall optimization while taking into account numerous optimization criteria, including minimizing the number of expensive two-qubit gates, minimizing the number of less expensive single-qubit gates, optimizing the runtime, minimizing the overall circuit error, and optimizing classical control sequences.  Our approach allows a trade-off between circuit runtime and quantum error, as well as to accommodate future changes in the optimization criteria that may likely arise as a result of the anticipated improvements in the physical-level control of the experiment.
\end{abstract}

\noindent {\bf Keywords:} quantum circuits, quantum circuit optimization, trapped ions, experimental quantum computing.

\section{Introduction}
The interest in quantum computing is rooted in the ability to solve certain computational problems more efficiently by a quantum algorithm than it is known how to do by a regular classical algorithm \cite{bk:nc}.  To take advantage of those quantum algorithms, a suitable quantum information processing (QIP) system needs to be developed---specifically, one that provides the means to efficiently execute protocols prescribed by the respective quantum algorithms \cite{ar:div}.  As of the time of this writing, fully programmable quantum computational devices spanning a few to several qubits included those built based on the superconducting circuits \cite{ar:dh, www:IBM} and trapped ions \cite{ar:blawi, ar:deb, ar:ladd} technologies.

Since the focus of this paper is on the computing over trapped ions QIP platform, we next quickly describe how it works.  For details specific to this paper, also see \cite{ar:deb}.  In the trapped ions QIP the qubits are stored in the spins of the individual ions ($^{171}$Yb$^+$ in \cite{ar:deb}), with the ions suspended in the free space (vacuum) via the use of electromagnetic fields.  When confined in two dimensions, ions form a line, spanning the remaining spacial dimension.  Weak confinement in the third dimension can maintain a linear structure of the ion crystal.  Observed qubit coherence time of $0.5s$ is so long that it is currently not a limiting factor on the size of the computation that is possible to execute; furthermore, it is expected that it can be scaled up by the orders of magnitude in the future \cite{ar:fslc}.  Lasers are used to both initialize the state of the system to a simple state $\ket{00...0}$ via a process called optical pumping, and to read out the state, relying on the fluorescence---specifically, through applying a laser that couples to only one of the two qubit states, and as a result emitting a stream of fluorecent photons.  Both state initialization and measurement are implemented with a very high efficiency---for all practical purposes, that approaching a 100\% accuracy.  Single- and two-qubit gates are implemented via laser-driven stimulated Raman transitions.  This gives rise to the single-qubit physical level gates $R$ and two-qubit $XX$ interaction discussed in detail later in this paper; these gates can be applied to an arbitrary and selectable set of qubits, and form a computationally universal gate library.  Gate fidelities are high, with the demonstrated average CNOT fidelity of $95.6-98.5\%$, depending on the pair of qubits the respective CNOT is being applied to \cite{ar:deb}.  Note that since the directly implementable elementary gates are $R$ and $XX$, the CNOT gate itself is a composite transformation.  Specifically, \cite{ar:deb} uses a 1-$XX$ and 6-$R$ implementation of the CNOT gate\footnote{Observe that this paper introduces a 1-$XX$ and 4-$R$ implementation of the CNOT gate, thereby likely improving the CNOT fidelities reported in \cite{ar:deb}.  The extent to which our optimized implementation improves over the one reported in \cite{ar:deb} needs to be established through the experiment.}.  The fidelities of the native physical-level $R$ and $XX$ gates are higher than that of the CNOT; we also note that the single-qubit $R$ gates have considerably higher fidelity compared to the $XX$ gates. The authors of \cite{ar:deb} furthermore expect to scale physical-level gate fidelities to $99.9\%$ and above with future improvements to the classical control hardware. 

Control over systems of several qubits and their interactions has reached a level where quantum algorithms consisting of many dozens of physical gates ({\em e.g.}, 80 physical single- and two-qubit gates in the QFT5 experiment demonstrated in \cite{ar:deb}) are within the reach and circuit optimization becomes a crucial part of their realization.  Here we report basic gate decompositions and a general circuit design and optimization approach that can be applied to systematically assemble relevant computational experiments. 

We will work with the pure quantum $n$-qubit states as given by the state vector $\sum_{i=0}^{2^n-1}\alpha_i\ket{i}$ and quantum circuits, defined as the products of quantum gates.  A quantum gate over a set of $n$ qubits is described by a $2^n \times 2^n$ unitary matrix $U$.  This formalizes the mathematical properties of the transformations that are possible in principle, but does not specify which of those can be implemented directly on the physical level, or how to compose those physical-level gates into efficient circuit sequences.  What is and is not possible to obtain on the physical level furthermore depends on the choice of the QIP platform and the available controlling apparatus.  In this paper we focus on the trapped ions approach \cite{ar:deb}.  We first describe physical-level gates obtained in the lab experiment, and show how to use them to efficiently implement known popular logical-level quantum gates such as Pauli gates and their roots, Clifford gates, the CNOT, controlled roots of Paulis, and the Toffoli gate---constituting a set most often used when describing quantum algorithms.  Note that due to full qubit connectivity, quantum SWAP gates may be pushed to the end of the quantum circuit and thereby implemented classically at no cost to the respective quantum computation.  Next, we propose a generic optimizing compiler that maps logical-level quantum circuits into efficient physical experiments.   We conclude the paper with benchmark results showing how the techniques developed can be applied to design optimized quantum computational experiments larger than those demonstrated previously, yet suitable for execution on the existing hardware \cite{ar:deb}.

\section{Single-Qubit Gates}

\subsection{Physical-level single-qubit rotation}

The controlling apparatus allows the application of the single-qubit rotation $R(\theta,\phi)$ described by the following unitary evolution operator:
\begin{equation*}
R(\theta,\phi):=\left( \begin{array}{rr}
\cos\frac{\theta}{2} & -ie^{-i\phi}\sin{\frac{\theta}{2}}\\
-ie^{i\phi}\sin\frac{\theta}{2} & \cos\frac{\theta}{2}
\end{array} \right).
\end{equation*}
Both $\theta$ and $\phi$ can be controlled by changing the duration and phase of the Raman beatnote that drives the Rabi oscillation of the qubit \cite{ar:deb}.

\subsubsection{Single-qubit gate cost}

The gate $R(\theta,\phi)$ has two cost parameters,
\begin{eqnarray}\label{eqn:ed}
d:=\frac{|\theta|}{\pi}\tau_{1q}, \text{ and }
e=e_1:=|\sin(\theta)|\epsilon \text{ or } e=e_2:=(|\theta| \bmod \pi)\epsilon,
\end{eqnarray}
where $d$ is the duration of the above single-qubit rotation and $e$ gives a model of the experimental error based on laser pulse area fluctuations due to laser intensity and timing jitter, leading to random over-/under-rotations of the qubit.  The formula describing $e_1$ is constructed such as to highlight the effect of the slope of the Rabi oscillation being smallest for full $\pi$ rotations when the gate is applied to a quantum state close to the computational basis states \cite{pers}.  For an unknown qubit state, it is impossible to tell the slope of Rabi oscillation, and thereby $e_1$ error model becomes inaccurate.  The formula describing $e_2$ is designed such as to highlight that the error should be proportional to the rotation angle \cite{ar:bskw}.  However, $e_2$ has its own limitations.  Indeed, such a definition predicts that, to consider a specific example, the error in the single pulse circuit $R(\pi/2,0)$ will be smaller than that in the gate $R(\pi,0)$ (both applied to a computational basis state).  However, the experiment shows the opposite result---$R(\pi,0)$ is more accurate than $R(\pi/2,0)$, and thereby $e_2$ is also inaccurate.

We note that while proper explanation and accurate modeling of experimental errors is very important, narrowing down a complete error model is not the focus of this paper.  For the purpose of this paper, any error model can be acceptable.  This is because the goal is to illustrate that trapped ions experiments can be optimized across a combination of (two) conflicting optimization criteria by those techniques reported, and to highlight the inner workings of such optimization approach.  With this in mind, we select $e_1$ model, since optimization over $e_2$ is equivalent to optimization of the duration, reducing the number of optimization criteria to just one. 

Presently, single-qubit rotations, as well as the two-qubit gates are implemented serially.  As a result, the overall runtime of a computational experiment, as described by its circuit, equals to the sum of the runtimes of the individual gates.  Depending on the desired properties of the circuit, one may choose to optimize the overall runtime, the overall error, the overall number of gates (including keeping separate counts of the single-qubit and two-qubit gates), as well as any combined figure of the above.  In this paper, we will describe the overall cost of an implementation as a length-2 vector $(d,e)$, with the components corresponding to the overall duration and the overall error.  The error component itself is described by the list (written as a linear combination) of all errors from all gates participating in the respective circuit, per the error model introduced for the individual gates.  This definition of the error does not correspond to the actual error seen in the experiment, but rather shows the influence and sources or errors within the given implementation.  We try to minimize the cost vector $(d,e)$, focusing separately on the duration and error.  One may choose to focus on other optimization criteria, such as, {\em e.g.}, minimization of the gate count; this does not affect the overall optimization strategy or the steps taken to arrive at the optimized solution. 

For future discussions, we will need the following relations: 
\begin{itemize}
\item $R^{-1}(\theta,\phi) = R(\theta,\phi-\pi)$, that can be used to construct the inverse of the $R(\theta,\phi)$ gate at the same cost as the original gate; and
\item $R(\theta,\phi) = (-1)\cdot R(\theta-2\pi,\phi)$, that can be helpful in that it provides the means for limiting the duration of any one $R(\theta,\phi)$ gate to at most $\tau_{1q}$, as the global phase does not matter.
\end{itemize}
Both identities are easy to verify directly.

\subsection{The RX, RY, and RZ rotations}
The single-qubit rotations around the basis axes must be expressed in terms of the physical-level $R$ gate to be implementable in an experiment.

\vspace{2mm}\noindent {\bf RX:} Setting $\phi= 0$ in $R(\theta,\phi)$ achieves the rotation about the X axis by the angle $\theta$, as follows:
\begin{equation}\label{eq:rx}
RX(\theta):=\left( \begin{array}{rr}
\cos\frac{\theta}{2} & -i\sin\frac{\theta}{2}\\
-i\sin\frac{\theta}{2} & \cos\frac{\theta}{2}
\end{array} \right)
=R(\theta,0).
\end{equation}
Observe that the duration of $RX(\theta)$ is $\frac{|\theta|\tau_{1q}}{\pi}$, whereas its error is $|\sin\theta|\epsilon$, {\it i.e.}, its cost vector is $\left(\frac{|\theta|\tau_{1q}}{\pi}, |\sin\theta|\epsilon\right)$.  The implementation (\ref{eq:rx}) is well known. 

\vspace{2mm}\noindent {\bf RY:} Setting $\phi= \frac{\pi}{2}$ in $R(\theta,\phi)$ obtains the rotation about Y axis by the angle $\theta$.  In particular,
\begin{equation}\label{eq:ry}
RY(\theta):=\begin{pmatrix}
\cos\frac{\theta}{2} & -\sin\frac{\theta}{2}\\
\sin\frac{\theta}{2} & \phantom{-}\cos\frac{\theta}{2}
\end{pmatrix}
= R(\theta, \pi/2).
\end{equation}
As a result, the cost of $RY(\theta)$ is $\left( \frac{|\theta|\tau_{1q}}{\pi}, |\sin\theta|\epsilon \right)$.  The implementation (\ref{eq:ry}) is also well known.  The costs of the $RX$ and $RY$ gates with the same rotation angle are thus the same.  $RZ$ is more difficult to obtain. In particular,

\vspace{2mm}\noindent {\bf RZ:} $RZ$ rotation is defined as follows, 
\begin{equation*}
RZ(\theta):=\begin{pmatrix}
e^{-i\theta/2} & 0 \\ 
0 & e^{i\theta/2} 
\end{pmatrix}.
\end{equation*}
It is easy to show that it cannot be obtained via a single physical $R$ pulse, and thus requires a circuit with two or more $R$ gates.  Firstly, we found the following circuit implementing the $RZ$ gate
\begin{eqnarray} \label{eq:rx-ry-rx}
RZ(\theta)= RY\left(-v\frac{\pi}{2}\right).RX\left(v\theta\right).RY\left(v\frac{\pi}{2}\right)  \\ \nonumber
= R\left(-v\frac{\pi}{2},\frac{\pi}{2}\right).R\left(v\theta, 0\right).R\left(v\frac{\pi}{2},\frac{\pi}{2}\right) ,
\end{eqnarray}
where $v \in \{-1, +1\}$ is a variable allowing to arbitrarily set the sign of either first or last $RY$ rotation, and `$.$' denotes matrix multiplication (recall that the order of gates in the circuit is given by the inverted order of matrices in the matrix product).  The implementation with $v=1$ was known to \cite{ar:deb}; as we will show later, the ability to choose $v$, being our contribution to the above circuit, is very important in circuit optimization.  The cost of this implementation is $\left(\frac{|\theta|\tau_{1q}}{\pi}+\tau_{1q}, 2 \times \epsilon + 1 \times |\sin\theta|\epsilon \right)$.  Alternatively, $RZ(\theta)$ gate may be obtained as
\begin{equation}\label{eq:rz}
RZ(\theta) \equiv R(\pi,x).R(\pi,x-\theta/2),
\end{equation}
up to an undetectable global phase of $-1$ (equality up to a global phase is furthermore denoted by `$\equiv$'), where the parameter $x$ may be set arbitrarily.  The cost of this implementation is $(2\tau_{1q}, 2 \times |\sin\pi|\epsilon) = (2\tau_{1q}, 0)$.  Observe that with the slightly longer execution time this second realization is associated with a smaller error, which seems to be a preferred scenario in the physical experiments if the gate is to be implemented by itself (as opposed to as a part of a larger computation).  The flexibility in setting $x$ within the above implementation allows to optimize quantum circuits where $RZ$ is one of the gates used, as varying the value $x$ allows to obtain either the $RX(\pi)$ gate or the $RY(\pi)$ gate to be either first or last gate in (\ref{eq:rz}), and those may cancel out with other gates in the circuit.  Varying parameter $x$ furthermore allows optimizing classical control, as selecting a value of the $R$ gate parameter used to implement a previous single-qubit gate allows to keep the phase of the Raman beatnote used a constant.  This, however, is only a minor improvement to the classical control sequences.  Implementation (\ref{eq:rx-ry-rx}) may become more desirable for the purpose of circuit optimization, since it relies on the efficiently optimizable sequence of $RX$ and $RY$ gates.

The $RZ$ gate may be implemented directly without resorting to a circuit-level composition of pulses by individually addressing the qubits with laser beams that result in a qubit energy level shift through the Stark effect \cite{ar:lhn,arXiv:1604.08840}.  Physical-level $RZ$ gate gives an advantage over the physical-level $R$ gate when one desires to construct the $RZ(\theta)$, as the $RZ(\theta)$ requires two physical-level $R$ gates, and only one physical-level $RZ$ gate to be implemented.  However, when such $RZ(\theta)$ is an internal gate to the circuit (and most gates in interesting quantum computations are internal), it can be written as either $R(\pi,0).R(\pi,-\theta/2)$ or $R(\pi,\theta/2).R(\pi,0)$ (\ref{eq:rz}), where $R(\pi,0)=RX(\pi)$ can be commuted past the two-qubit $XX$ gate (discussed in Section \ref{sec:2q}) selectably to either left or right.  This results in the effective ability to implement an internal $RZ(\theta)$ gate with just a single physical-level $R$ gate.  We furthermore note that in our optimized implementations of those circuits we tried, whenever the goal is to have no more than two sequential internal $R$ gates apply to a given qubit in a sequence, this was always possible to accomplish.  While this is unlikely to scale to arbitrary quantum computations, it is perhaps true that in practical designs most frequently no more than one $R$ gate is required between a pair of two-qubit $XX$ gates acting on a given qubit.  This illustrates the expressive power of the $R$ gates when used in conjunction with the two-qubit $XX$ gates.  To conclude this discussion, we believe the ability to implement $RZ$ directly may not be in high demand, unless the properties of such a physical-level $RZ$ gate, including its duration and error, are superior to the $R$ gate. 

$RX(\pi)$, $RY(\pi)$, and $RZ(\pi)$ implement Pauli-X, Pauli-Y, and Pauli-Z gates up to an undetectable global phase of $-i$. $RX(\pi/2)$ implements the square-root-of-NOT gate $V:=\frac{1+i}{2}\begin{pmatrix} \phantom{-}1 & -i \\ -i & \phantom{-}1 \end{pmatrix}$ up to a global phase.  The rotation $RZ(\pi/2)$ implements the quantum Phase gate (commonly referred to as $P$ or $S$) $P:=\begin{pmatrix} 1 & 0 \\ 0 & i \end{pmatrix}$ up to a global phase. The quantum $\pi/8$ gate also known as the $T$ gate, $T:=\begin{pmatrix} 1 & 0 \\ 0 & \frac{1+i}{\sqrt{2}} \end{pmatrix}$, is obtained as $RZ(\pi/4)$, up to a global phase. 

Recall that $RX$, $RY$, and $RZ$ gates do not commute, but their parameters may be added, {\it i.e.}, $G(a)G(b)=G(a+b)$, when $G$ is either one of $RX$, $RY$, or $RZ$.  This is important for the circuit optimization technique discussed later.

\subsection{Other common single-qubit rotations}
A common single-qubit gate that may not be expressed as an axial rotation with a certain parameter is the Hadamard gate, $H:=\frac{1}{\sqrt{2}}\begin{pmatrix} 1 & \phantom{-}1 \\ 1 & -1 \end{pmatrix}$. It can be implemented up to a global phase as one of the following two circuits.
\begin{eqnarray}\label{eq:h}
H \equiv RY(-\pi/2).RX(\pi) = R(\pi/2,\pi/2).R(\pi, 0) , \\
H \equiv RX(-\pi).RY(\pi/2) = R(-\pi, 0).R(\pi/2,\pi/2) . \nonumber
\end{eqnarray}
The cost of each of the above circuits is $(1.5 \tau_{1q},\epsilon)$.  As such, the Hadamard gate is, roughly speaking, as expensive as the $RZ$ rotation, and more expensive than either $RX$ or $RY$.  The ability to choose which of the $RX/RY$ gates in the decomposition of the Hadamard gate comes first and which is second is important to the optimization of quantum circuits.

\subsection{Arbitrary single-qubit rotations}
An arbitrary single-qubit unitary gate can be written as a matrix
$U = e^{id} \left( \begin{array}{rr}
e^{ia}\cos{b} & e^{ic}\sin{b}\\
-e^{-ic}\sin{b} & e^{-ia}\cos{b}
\end{array} \right)$
of four real-valued parameters $a,b,c,$ and $d$. We found the following implementation as a circuit with at most two physical-level $R$ gates, 
\begin{equation}\label{eq:any-single-qubit-gate}
U \equiv \left( \begin{array}{rr}
e^{ia}\cos{b} & e^{ic}\sin{b}\\
-e^{-ic}\sin{b} & e^{-ia}\cos{b}
\end{array} \right) = R(-\pi,-c-\pi/2).R(2b+\pi,a-c-\pi/2).
\end{equation}
The cost of this implementation is $(\tau_{1q} + \frac{|2b \bmod \pi|\tau_{1q}}{\pi}, |\sin(2b)|\epsilon )$.  Observe, that equation (\ref{eq:any-single-qubit-gate}) uses the minimal number of physical-level gates $R$ required to implement an arbitrary single-qubit unitary.  This can be established by counting the number of the real-valued degrees of freedom in the $2 \times 2$ unitary matrices up to global phase, and comparing it to the number of the real-valued degrees of freedom of the $R$ gates.  Equation (\ref{eq:any-single-qubit-gate}) furthermore gives rise to the following Lemma providing a guarantee on the cost of quantum physical-level circuits.

\vspace{1mm}\begin{lemma}\label{lem:1}
Any quantum physical-level circuit over $n$ qubits with $G$ two-qubit $XX$ gates can be reduced to an equivalent one with no more than $2(n+2G)$ single-qubit $R$ gates, providing, across all single-qubit gates used, an overall contribution of no more than $2\tau_{1q}(n+2G)$ to the runtime and a term of no more than $(n+2G) \times \epsilon$ to the error. 
\end{lemma}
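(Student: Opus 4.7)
The plan is to exploit the previously derived universal single-qubit decomposition (\ref{eq:any-single-qubit-gate}) by slicing each qubit's timeline into maximal runs of single-qubit gates separated by the two-qubit $XX$ gates that act on it. For each qubit $i$, let $k_i$ denote the number of $XX$ gates touching qubit $i$. Those $k_i$ two-qubit gates split the single-qubit activity on qubit $i$ into exactly $k_i+1$ contiguous blocks (one before the first $XX$, one between each pair of consecutive $XX$'s, and one after the last $XX$). Each such block is, by composition, a single-qubit unitary, which we may treat as an arbitrary $U$ with parameters $(a,b,c,d)$.

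Summing the block counts across all qubits yields the total number of single-qubit blocks $\sum_{i=1}^{n}(k_i+1) = n + \sum_i k_i = n + 2G$, since every $XX$ gate is counted by exactly two of the $k_i$'s. I would then replace each block by its canonical two-$R$ decomposition (\ref{eq:any-single-qubit-gate}), immediately bounding the total number of physical single-qubit gates by $2(n+2G)$.

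For the runtime bound, the stated cost of (\ref{eq:any-single-qubit-gate}) is $\tau_{1q} + \frac{|2b \bmod \pi|\tau_{1q}}{\pi} \le 2\tau_{1q}$ per block, giving a total single-qubit runtime contribution of at most $2\tau_{1q}(n+2G)$. For the error bound I would use that in the decomposition $R(-\pi,-c-\pi/2)\cdot R(2b+\pi,a-c-\pi/2)$, the first pulse has error $|\sin(-\pi)|\epsilon=0$ and the second has error $|\sin(2b+\pi)|\epsilon=|\sin(2b)|\epsilon\le \epsilon$ under the $e_1$ model selected in the paper. Hence each block contributes at most $\epsilon$, for a total of at most $(n+2G)\,\epsilon$.

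There is no real obstacle here beyond bookkeeping; the one place to be a touch careful is to check that blocks at the very beginning and very end of the circuit on each qubit are correctly counted (hence the ``$+1$'' per qubit, contributing the summand $n$ alongside $2G$), and to invoke the identity $R(\theta,\phi)\equiv -R(\theta-2\pi,\phi)$ from the preliminaries to justify that the ``$|2b \bmod \pi|$'' reduction in the runtime bound is legitimate, since it allows absorbing any full $\pi$ over-rotation into a global phase.
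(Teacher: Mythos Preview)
Your proposal is correct and follows essentially the same approach as the paper: count the single-qubit ``pieces of wire'' between consecutive $XX$ gates (the paper does this by a short induction on $G$, you by the equivalent direct sum $\sum_i(k_i+1)=n+2G$), then invoke the two-$R$ decomposition~(\ref{eq:any-single-qubit-gate}) on each piece. Your write-up is in fact slightly more explicit than the paper's in spelling out why each block contributes at most $2\tau_{1q}$ to the runtime and at most $\epsilon$ to the error under the $e_1$ model.
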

\begin{proof}
First, count the number of the ``pieces of wire'', defined as an uninterrupted by any two-qubit gate qubit evolution time piece between two two-qubit gates in the circuit.  This number is given by the expression $n+2G$, as is easy to verify by induction on $G$, the number of the two-qubit gates. Indeed, there are $n$ pieces of wire in quantum circuits with no two-qubit gates, and the introduction of a two-qubit gate at the end of the circuit increases the number of the pieces of wire by two (specifically, on those qubits that the given two-qubit gate operates on). The single-qubit operations are contained to the individual pieces of wire, allowing to conclude the proof by referring to the equation (\ref{eq:any-single-qubit-gate}) and the definitions of the duration and error.
\end{proof}

The above Lemma reports an upper bound on the number of single-qubit gates, and could be used as a bottom line comparison for evaluating the efficiency of the optimizing compiler developed and tested in Sections \ref{sec:compiler}-\ref{sec:benchmarks}.  We furthermore observe that the above Lemma applies to show that the $5$-qubit $80$-gate QFT5 circuit, of which $10$ are two-qubit gates reported in \cite{ar:deb} is suboptimal.  This is because according to Lemma \ref{lem:1} the upper bound on the number of gates in such a circuit is $60$.

\section{Two-Qubit Gates}\label{sec:2q}
\subsection{Physical-level two-qubit gate}
The physical-level two-qubit gate available to us is the so-called $XX(\chi)$ gate, with parameter $\chi$ that depends on the pair of ions the gate is being applied to.  The gate itself is defined by the following unitary matrix \cite{ar:deb}:  
\begin{equation*}
XX(\chi)= \left( \begin{array}{rrrr}
\cos(\chi) & 0 & 0 & \hspace{-2mm}-i \sin(\chi)\\
0 & \hspace{-1mm}\cos(\chi) & \hspace{-3mm}-i \sin(\chi) & 0\\
0 & \hspace{-3mm}-i \sin(\chi) & \hspace{-1mm}\cos(\chi) & 0\\
-i \sin(\chi) & 0 & 0 & \cos(\chi)\\
\end{array} \right).
\end{equation*}

The absolute value of the phase, $|\chi|$, can be set to an arbitrary real number between $0$ and $\pi/2$ by varying the laser power used in the experiment \cite{ar:deb}.  The sign of $\chi$ depends on the laser detuning which is chosen based on the normal modes a particular pair of ions interacts with most strongly and hence which qubits the gate is being applied to \cite{ar:deb}.  The sign for each two-qubit gate is thus fixed experimentally and becomes an input parameter for how one is allowed to construct circuits. 

The $XX(\chi)$ gate implements the well-known M\o lmer-S\o rensen gate \cite{ar:somo}, and latter is known to generate the CNOT gate (for $\chi = \pm \pi/4$) using single-qubit operations on the input and the output side.  The CNOT gate is an important computational primitive---the ability to obtain it, coupled with the ability to implement any single-qubit gate, see equation (\ref{eq:any-single-qubit-gate}), gives computational universality \cite{ar:bbcd}.  However, the ability to vary parameter $\chi$ to accept values beyond $\pm \pi/4$ allows a more efficient implementation of important quantum gates.  Once computational universality is obtained, the efficiency becomes a next important step. 

The following property of the $XX(\chi)$ gate is important to note for future discussions: $XX(\chi)$ commutes with any single-qubit $RX(\theta)$ rotation.  However, $XX(\chi)$ does not commute with either $RY(\theta)$ or $RZ(\theta)$ when $\theta \not\equiv 0 \bmod 2\pi$.

\subsubsection{Cost function variables}

The two-qubit $XX(\chi)$ gate has the vector-function cost $(d,e)$, where \cite{pers}:
\begin{eqnarray}\label{eqn:2ed}
d:=\tau_{2q}, \text{ and } 
e=e_1:=|\sin(2\chi)|E \text{ or } e=e_2:=E.
\end{eqnarray}
Here, $d$ is the duration of the two-qubit $XX$ interaction, and $e$ either models the error due to fluctuations in the experiment, analogous to the single-qubit case ($e_1$), or simply accepts a constant value $E$ that is independent of the rotation angle ($e_2$).  Current experiment \cite{ar:deb} is setup such that the two-qubit $XX$ gates can be applied to any pair of qubits in a serial fashion. Compared to the single-qubit gate $R$, the two-qubit gate is substantially longer in runtime and has a higher error.  As a result, efficient circuit implementations must prefer the minimization of the use of the two-qubit $XX$ gate over minimizing the single-qubit gates. 

\subsection{Constructing the CNOT gate}

\begin{figure*}[t]
\centerline{
\begin{tabular}{c}
\Qcircuit @C=0.1em @R=1.5em @!R {
& \qw 	& \ctrl{1}	& \qw \\
& \qw		& \targ		& \qw
}
\raisebox{-1.4em}{\hspace{1mm}$\equiv$\hspace{1mm}}
\Qcircuit @C=0.3em @R=0.6em @!R {
& \qw 	& \gate{RY(v\frac{\pi}{2})}	& \multigate{1}{XX(s\frac{\pi}{4})} 	& \gate{RX(-s\frac{\pi}{2})} 	& \gate{RY(-v\frac{\pi}{2})}	& \qw \\
& \qw	& \qw						& \ghost{XX(s\frac{\pi}{4})} 			& \gate{RX(-vs\frac{\pi}{2})}	& \qw								& \qw
} \\ 
\end{tabular}
}
\caption{Implementation of the CNOT gate using physical-level gates, where $s = \pm 1$ is the sign of the interaction parameter $\chi$, specified by the ions the gate applies to ($s$ is a parameter that cannot be varied), and $v = \pm 1$ may be chosen arbitrarily.}\label{circ:cnot}
\end{figure*}
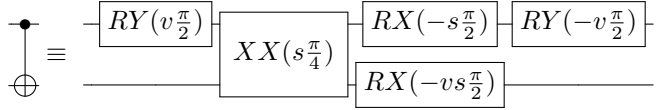

Depending on the sign $s$ of the interaction $\chi$ for the given pair of qubits that we want to apply the CNOT gate to, it can be implemented up to the global phase of $(-1)^{-\frac{vs}{4}}$ such as shown in Figure \ref{circ:cnot}.  Observe that $v$ may be chosen arbitrarily from the set $\{-1,+1\}$ to set the signs of the rotation angle in $RY$ at the beginning or at the end of the implementation.  In particular, parameter $v$ may be chosen such as to set the angle of the first $RY$ rotation to the positive number, $+\frac{\pi}{2}$, in which case the second $RY$ features the negative sign, or vice versa.  The ability to choose the sign is particularly important in allowing single-qubit $RY$ gate cancellations while decomposing logical-level circuits with multiple CNOT gates into efficient physical-level circuits. 

The cost of the above implementations of the CNOT gate is $(2\tau_{1q} + \tau_{2q}, 4 \times \epsilon + 1 \times E)$. 

Other than using two fewer single-qubit gate pulses compared to \cite{arXiv:1508.03392}, our CNOT implementation allows to arbitrarily set the values of the $RY$ rotations, and also allows further transformations and optimizations per template (\ref{eqn:single-qubit}) discussed later.  Our CNOT implementation saves two single-qubit pulses (one $RX(\pm\frac{\pi}{2})$ and one $RY(\pm\frac{\pi}{2})$) over the one reported in \cite{ar:deb}. 


\subsection{Constructing controlled-roots of Paulis}

Controlled roots of axial rotations (Pauli gates) play an important role in quantum circuits.  For instance, the $n$-qubit quantum Fourier transform is best viewed as a circuit with $\frac{n(n-1)}{2}$ controlled roots of Pauli-$Z$ gates \cite[Figure 5.1]{bk:nc}.  The controlled-sqrt-NOT gate is used in the construction of an efficient five two-qubit gate circuit implementing the Toffoli gate \cite[Lemma 6.1]{ar:bbcd}.  Otherwise, if the proper root is not available, and the CNOT gate is the only two-qubit gate directly constructible, the Toffoli gate requires six two-qubit physical gates \cite{ar:sm}.  It is furthermore known that each controlled unitary gate can be implemented with the use of two CNOT gates, and, equivalently, two M\o lmer-S\o rensen gates, along with some single-qubit gates \cite[Lemma 5.1]{ar:bbcd}.  We next show that only one $XX(\chi)$ gate suffices to implement any controlled root of a Pauli gate, providing an improvement by a factor of two.  In particular, depending on the sign $s$ of $\chi$, the controlled-$X^\alpha$, $\alpha \in \mathbb{R}, -1 \leq \alpha \leq 1,$ may be obtained as follows:
\begin{eqnarray}\label{circ:controlled-root}
\Qcircuit @C=0.2em @R=1em @!R {
& \qw 	& \ctrl{1}			& \qw \\
& \qw	& \gate{X^\alpha}	& \qw
}
\raisebox{-1.1em}{\hspace{1mm}$\equiv$\hspace{1mm}}
\Qcircuit @C=0.4em @R=0.6em @!R {
& \qw 	& \gate{RY(-s\frac{\pi}{2})}	& \multigate{1}{XX(s\frac{\alpha\pi}{4})} 	& \gate{RX(-s\frac{\alpha\pi}{2})} 	& \gate{RY(s\frac{\pi}{2})}	& \qw \\
& \qw	& \qw							& \ghost{XX(s\frac{\alpha\pi}{4})} 			& \gate{RX(\frac{\alpha\pi}{2})}	& \qw						& \qw
} 
\end{eqnarray}

The inverse of the controlled-$X^\alpha$ may be obtained from the above circuit by attempting to construct the controlled-$X^{-\alpha}$.  Observe that the decomposition with the sign opposite to the physically available sign of $\chi$ needs to be selected.  Other controlled roots of Paulis, such as the controlled-$Y^\alpha$ and the controlled-$Z^\alpha$ are related to the controlled roots of the NOT gate by the following formulas, 
\begin{eqnarray}\label{circ:cya-cza}
\Qcircuit @C=0.5em @R=1em {
& \ctrl{1} 				& \qw \\
& \gate{Y^\alpha} & \qw
}
&
\raisebox{-1em}{=}
&
\Qcircuit @C=0.7em @R=1em {
& \qw		& \ctrl{1} 				& \qw					& \qw \\
& \gate{P^{\dagger}} & \gate{X^\alpha} & \gate{P} & \qw
} \\ \nonumber
\Qcircuit @C=0.5em @R=1em {
& \ctrl{1} 				& \qw \\
& \gate{Z^\alpha} & \qw
}
&
\raisebox{-1em}{=}
&
\Qcircuit @C=0.7em @R=1em {
& \qw		& \ctrl{1} 				& \qw					& \qw \\
& \gate{H} & \gate{X^\alpha} & \gate{H} & \qw
} 
\end{eqnarray}

As a result, all controlled roots of Paulis are constructible using at most one physical-level two-qubit $XX$ gate. 

In our constructions, we favoured the decompositions that feature $RY(\pm\frac{\pi}{2})$ gates with arbitrarily selectable sign of the rotation parameter.  This was done to allow the selection of the specific sign or the relations between signs to dictate the $RY$ gate cancellations.  In some cases, the results of such cancellations can be dramatic, as is illustrated in the following Lemma.
\vspace{1mm}\begin{lemma}\label{lem:2}
Circuits over arbitrary $RZ$ rotations (including Pauli-$Z$, Phase, and $T$ gates) and the controlled-$Z$ can be written as an efficient trapped ions physical-level implementation as follows:
\begin{enumerate}
\item The layer of $RY$ gates, defined as the set of gates $RY(v_i\frac{\pi}{2})$ applied to every qubit $i$, where $v_i \in \{-1,+1\}$ can be selected arbitrarily.
\item The layer of $RX$, where qubit $i$ experiences the application of $RX(t-(v_{x_1}s_{ix_1}+v_{x_2}s_{ix_2}+...+v_{x_k}s_{ix_k})\frac{\pi}{2})$, where $t$ is the aggregate angle accomplished by the combination of $RZ$ rotations applied to this qubit, $k$ is the number of $CZ$ gates that apply to the qubit $i$, and $s_{ix1},s_{ix_2},...,s_{ix_k}$ are the signs of the respective interactions. 
\item The set of $XX$ gates, where each controlled-Z gate $CZ(a,b)$ in the original circuit is represented by $XX_{ab}(s_{ab}\frac{\pi}{4})$.
\item The layer of $RY$ gates, with $RY(-v_i\frac{\pi}{2})$ applied to the qubit $i$.
\end{enumerate}
\end{lemma}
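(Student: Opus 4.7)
The plan is to exploit the fact that all gates in the input circuit are diagonal in the computational basis: arbitrary $RZ$ rotations and controlled-$Z$ gates all commute pairwise. This lets us aggregate the $RZ$ angles on each qubit into a single $RZ(t_i)$, and treat the $CZ$ family as a separate commuting collection that can be reordered at will.

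First I would decompose a single $RZ(t)$ using equation (\ref{eq:rx-ry-rx}), giving $RZ(t)=RY(-v\pi/2)\cdot RX(v t)\cdot RY(v\pi/2)$. Next I would derive an analogous sandwich identity for a single $CZ_{ab}$. The key ingredient is the basis change $RY(-v\pi/2)\cdot X\cdot RY(v\pi/2) = v\,Z$ (easily checked by the matrix formula for $RY$), which implies
\begin{equation*}
\bigl(RY(-v_a\pi/2)\otimes RY(-v_b\pi/2)\bigr)\cdot XX(s\pi/4)\cdot \bigl(RY(v_a\pi/2)\otimes RY(v_b\pi/2)\bigr) \;=\; ZZ\bigl(s\, v_a v_b\,\pi/4\bigr).
\end{equation*}
A direct diagonal-entry computation then shows that, up to a global phase,
\begin{equation*}
CZ_{ab} \;\equiv\; \bigl(RZ_a(-s_{ab} v_a v_b\,\pi/2)\otimes RZ_b(-s_{ab} v_a v_b\,\pi/2)\bigr)\cdot ZZ_{ab}\bigl(s_{ab} v_a v_b\,\pi/4\bigr).
\end{equation*}
Pulling these two residual $RZ$ factors back through the outer $RY$ sandwich (using (\ref{eq:rx-ry-rx}) again) converts them into \emph{additional} $RX$ contributions of $-v_b s_{ab}\pi/2$ on qubit $a$ and $-v_a s_{ab}\pi/2$ on qubit $b$, which are precisely the summands appearing in the statement.

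Then I would assemble the full circuit. Since every pair of commuting gates in the input circuit can be freely reordered, and since operations on disjoint qubits commute, the opening $RY(v_i\pi/2)$ factors on each qubit can all be merged into a single pulse, and likewise for the closing $RY(-v_i\pi/2)$ factors. The interior $RX$ contributions on each qubit $i$ combine additively via $RX(a)RX(b)=RX(a+b)$: the aggregated $RZ$ produces the $t$ term, and each $CZ$ incident to $i$ contributes $-v_{x_j}s_{ix_j}\pi/2$. All the $XX$ gates accumulate in the middle layer as claimed.

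The main obstacle is sign bookkeeping, which must be tracked through three places: (i) the conjugation of $XX(s\pi/4)$ by the outer $RY$ layer must produce the correct $v_a v_b s$ factor; (ii) the residual $RZ$ required to turn $ZZ(v_a v_b s\,\pi/4)$ into the desired $CZ$ (together with the correct global phase) must be computed consistently; and (iii) when those residuals are pushed back inside the $RY$ sandwich they must yield precisely the $-v_{x_j}s_{ix_j}\pi/2$ term. Once these sign consistencies are verified, commutativity of the diagonal original gates and additivity of single-axis rotations immediately give equivalence of the assembled circuit to the input up to an undetectable global phase.
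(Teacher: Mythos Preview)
Your proposal is correct and follows essentially the same route as the paper: both arguments hinge on writing every $RZ$ and every $CZ$ as an $RY(v\pi/2)$--[$RX$/$XX$]--$RY(-v\pi/2)$ sandwich with consistently chosen $v_i$, so that adjacent $RY$ pairs cancel and the commuting $RX$/$XX$ interior collapses to one $RX$ layer plus the $XX$ gates. The only cosmetic difference is that you \emph{derive} the $CZ$ sandwich by conjugating $XX$ to $ZZ$ and peeling off residual $RZ$ factors, whereas the paper simply posits the decomposition~(\ref{circ:cz}) and verifies the cancellations from there.
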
 
\begin{proof}
Construct the controlled-Z gate as follows:
\begin{eqnarray}\label{circ:cz}
\Qcircuit @C=0.2em @R=1em @!R {
& \qw 	& \ctrl{1}	& \qw \\
& \qw	& \gate{Z}	& \qw
}
\raisebox{-1.1em}{\hspace{1mm}$\equiv$\hspace{1mm}}
\Qcircuit @C=0.4em @R=0.6em @!R {
& \qw 	& \gate{RY(v_1\frac{\pi}{2})}	& \multigate{1}{XX(s\frac{\pi}{4})} 	& \gate{RX(-v_2s\frac{\pi}{2})} 	& \gate{RY(-v_1\frac{\pi}{2})}	& \qw \\
& \qw	& \gate{RY(v_2\frac{\pi}{2})}	& \ghost{XX(s\frac{\pi}{4})} 			& \gate{RX(-v_1s\frac{\pi}{2})}		& \gate{RY(-v_2\frac{\pi}{2})}	& \qw
}
\end{eqnarray} 
Observe that per formulas (\ref{eq:rx-ry-rx},\ref{circ:cz}) and upon the decomposition of $RZ$ and controlled-$Z$ gates into $RX/RY/XX$ circuits each qubit in each $RZ/CZ$ gate experiences the application of $RY(v\frac{\pi}{2})$ in the beginning and $RY(-v\frac{\pi}{2})$ in the end.  Therefore, setting the value of the parameter $v$ equal to the previously used $v$ accomplishes the following: every two $RY$ gates between any two $RZ/CZ$ gates in the target circuit cancel out.  However, there will be a layer of $RY$ in the beginning of the circuit and a layer of $RY$ at the end with the opposite signs, for each qubit that experiences an application of a gate.  Next, notice that all internal gates (those except the outside $RY$ layers) are $RX$ and $XX$, and thereby they all commute.  This means we need only one layer of $RX$, and due to the formulas (\ref{eq:rx-ry-rx},\ref{circ:cz}), the rotation angle is calculated such as stated in the Lemma.  Each $CZ(a,b)$ in the original circuit is furthermore represented by $XX_{ab}(s_{ab}\frac{\pi}{4})$, per formula (\ref{circ:cz}).
\end{proof}

Observe that allowing the layers of $RY$ gates in Lemma \ref{lem:2} to share one sign across all $RY$ used ($v_i=v_j$ for every pair of qubits $i$ and $j$) allows their implementation with a single global pulse. This capability is currently not supported by the existing hardware \cite{ar:deb}, but may potentially be implemented in the future.

\subsection{Useful single-qubit circuit identity}

\begin{figure}[t]
\centering
\begin{tabular}{cc}
\includegraphics[width=0.37\columnwidth]{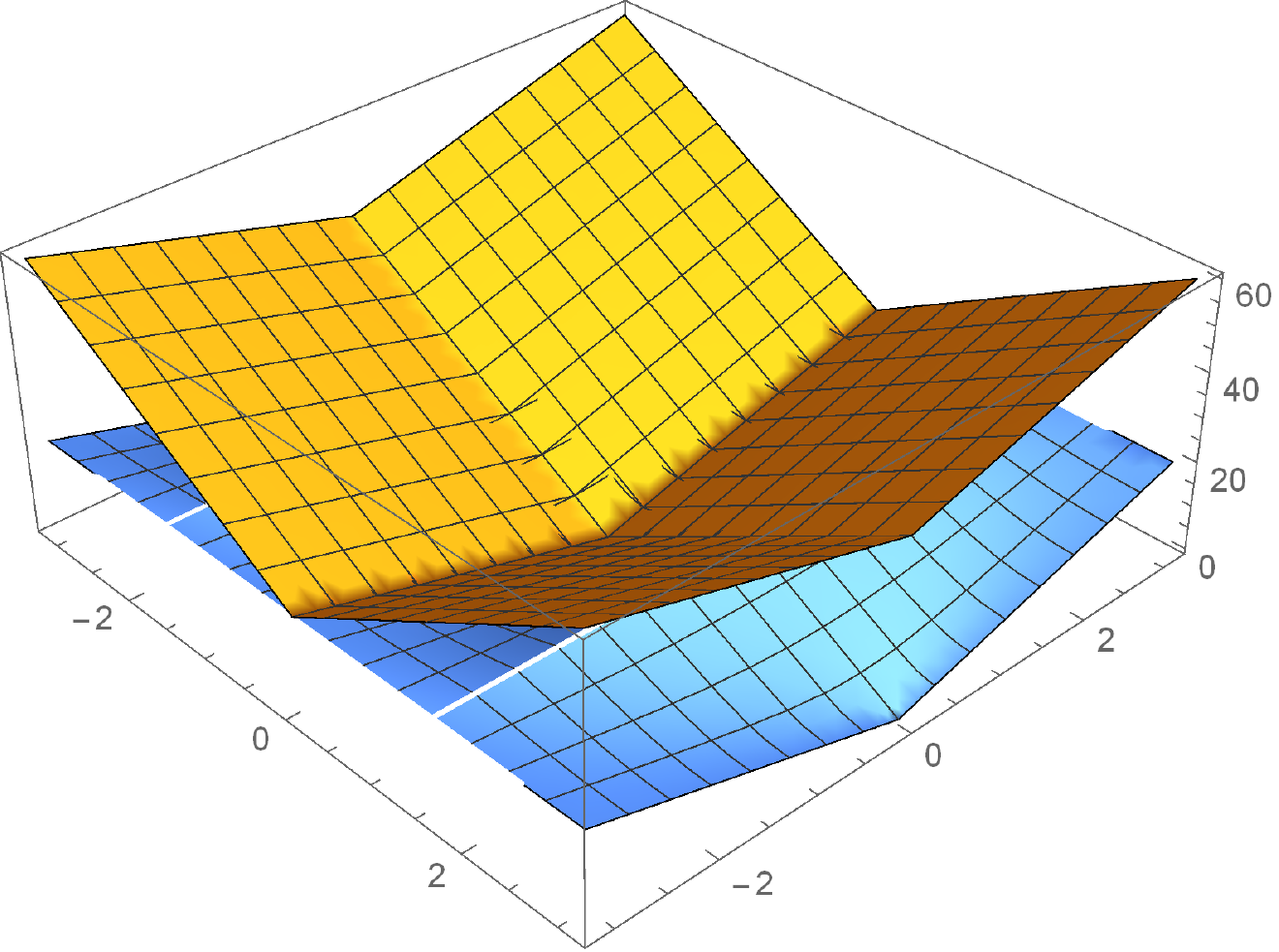} & \includegraphics[width=0.37\columnwidth]{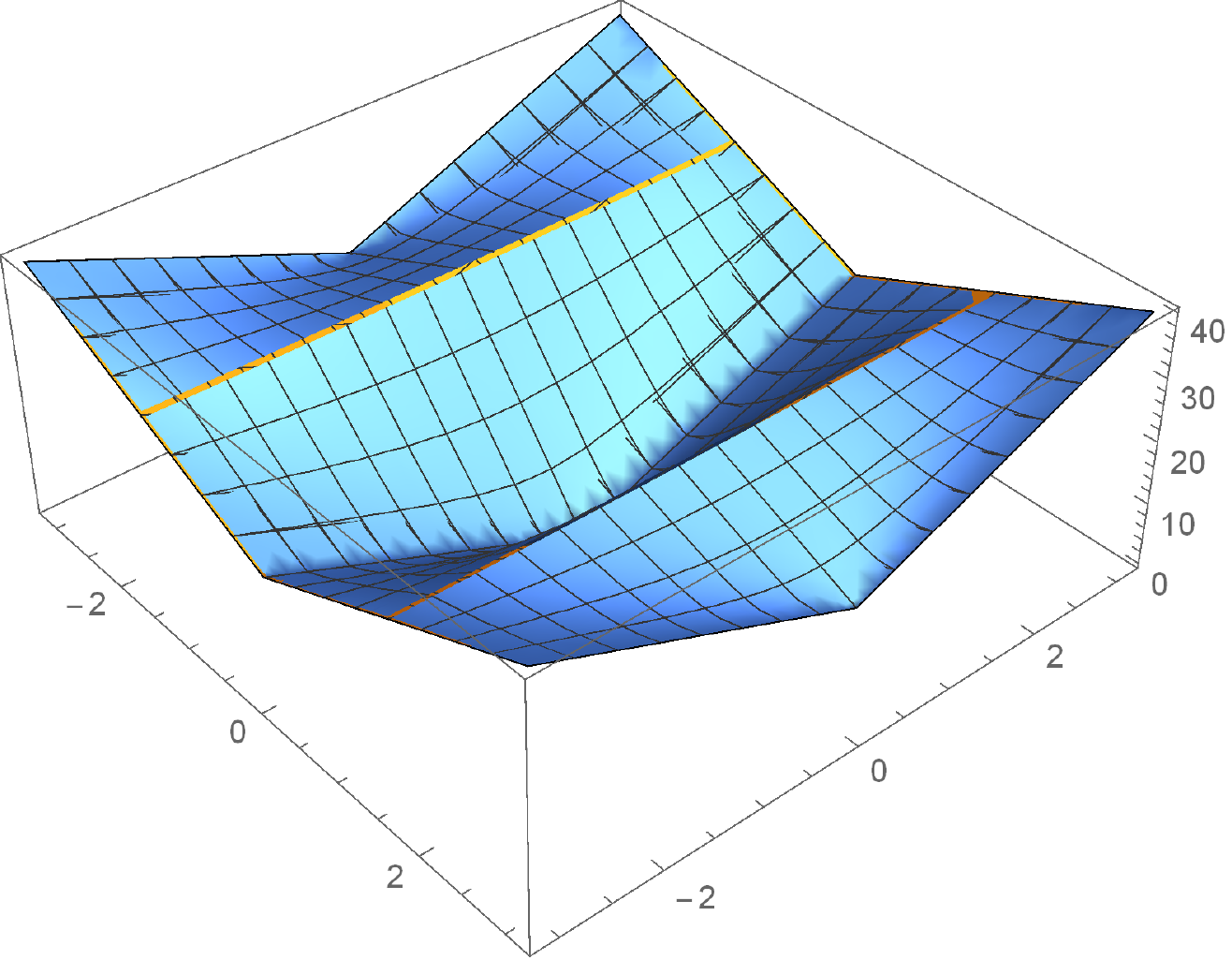} \\
(a) & (b) \\
\includegraphics[width=0.4\columnwidth]{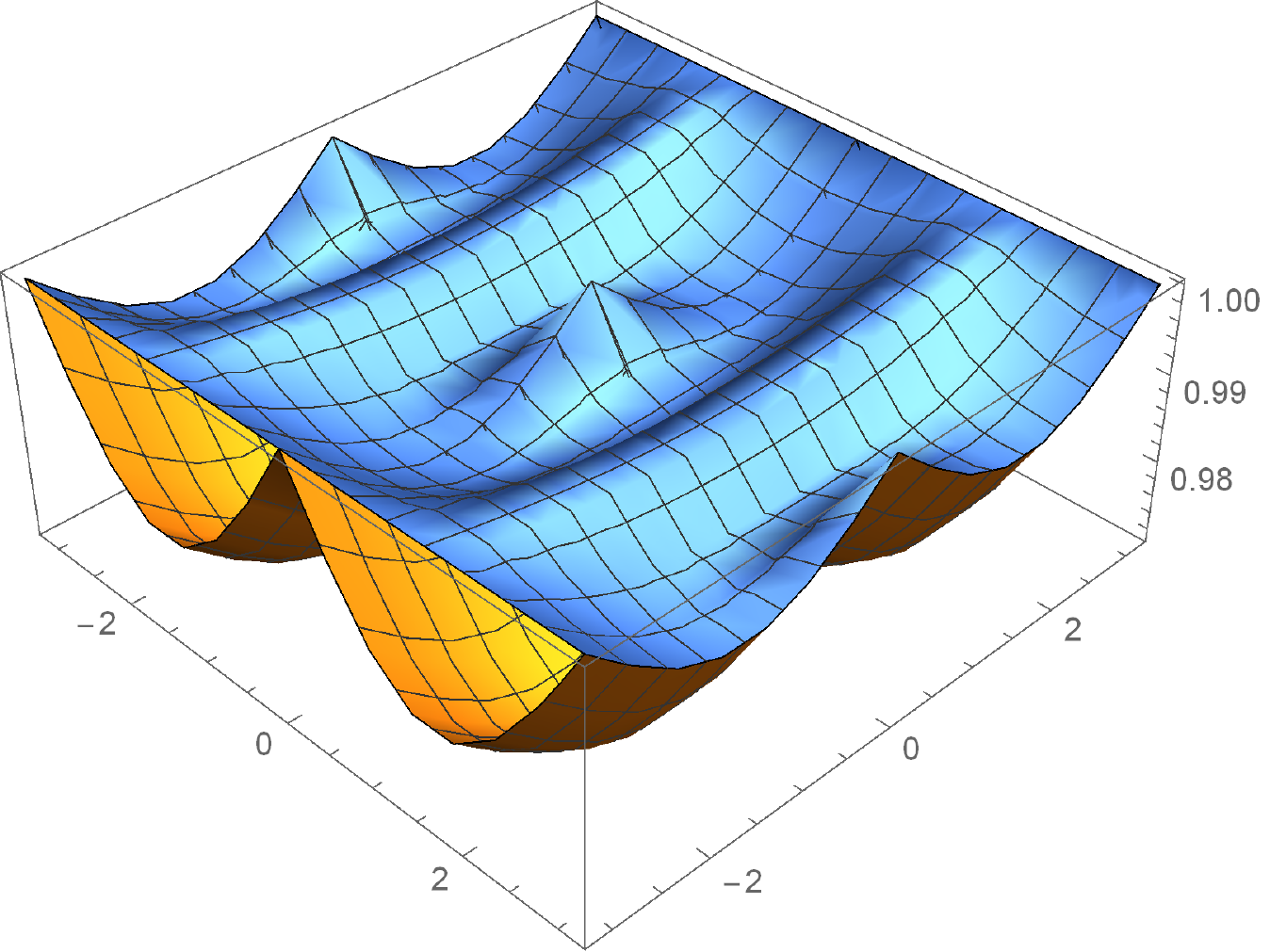} & \includegraphics[width=0.4\columnwidth]{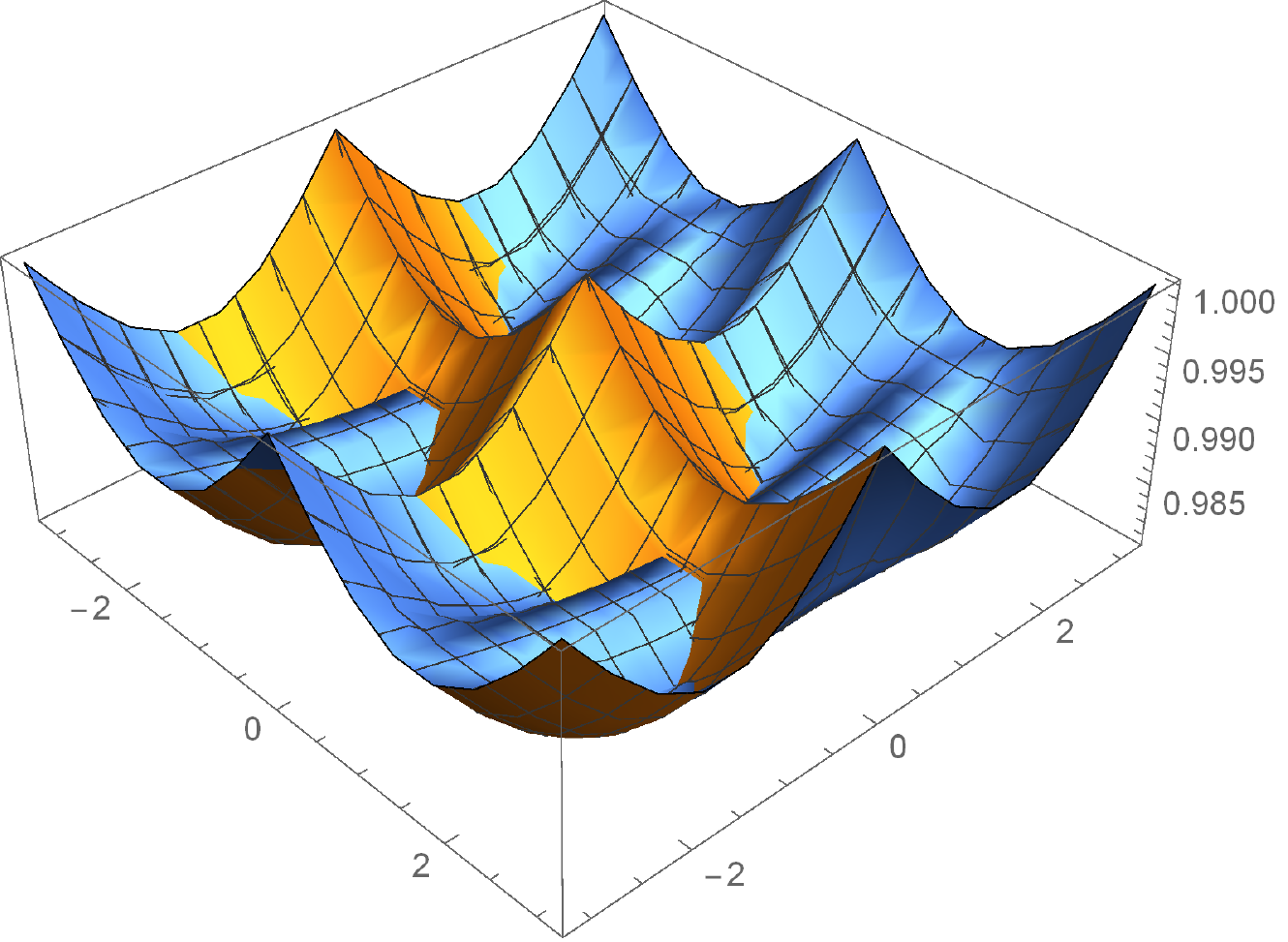} \\
(c) & (d) 
\end{tabular}
\caption{(a) change in the duration (old: yellow, new:blue; lower is better) in replacing $RX(a)RY(b)RX(a)$, $a,b \in [-\pi,\pi]$ by $R(c,d)$; 
(b) change in the duration (old: yellow, new:blue; lower is better) in replacing $RX(a)RY(b)$, $a,b \in [-\pi,\pi]$ by $R(c,d)RX(-a)$;
(c) change in the fidelity (old: yellow, new: blue; higher is better) in replacing $RX(a)RY(b)RX(a)$, $a,b \in [-\pi,\pi]$ by $R(c,d)$;
(d) change in the fidelity (old: yellow, new: blue; higher is better) in replacing $RX(a)RY(b)$, $a,b \in [-\pi,\pi]$ by $R(c,d)RX(-a)$.
For the purpose of this illustration, $\tau_{1q}$ was selected to be equal to $20 \mu s$ and $\epsilon = 0.01$, roughly corresponding to those values seen in a specific experiment \cite{ar:deb}.
Fidelity of the circuit spanning a single qubit with the gates $G_1,G_2,...,G_k$ featuring the individual errors $e_1,e_2,...,e_k$ is calculated as the product $\prod_{i=1}^k(1-e_i)$. This corresponds to the model where the errors are independent, and is consistent with the understanding of the physics and source of errors in the $R$ gates \cite{ar:deb, pers}.}
\label{fig:tem}
\end{figure}

Recall that a quantum template is a quantum circuit with $n$ gates that evaluates to the identity, $G_0G_1...G_{n-1} = Id$ \cite{ar:mdmn}.  A template can be used to construct a number of circuit identities, 
\begin{equation*}
G_iG_{i+1 \bmod n}...G_{i+k-1 \bmod n} = G^{-1}_{i-1 \bmod n} G^{-1}_{i-2 \bmod n} ... G^{-1}_{i+k \bmod n},
\end{equation*}
for arbitrary $i$ and $k$, $0 \leq i,k \leq n$, which may, in turn, be used to optimize quantum circuits via matching gates on the left hand side in the above equation and replacing them with the gates on the right hand side.  Here we report one such template that is particularly useful in our constructions.  Specifically, for any $a,b \in \mathbb{R}$ there exist $c,d \in \mathbb{R}$ such that:
\begin{equation}\label{eqn:single-qubit}
\Qcircuit @C=0.5em @R=1em {
& \gate{RX(a)}	& \gate{RY(b)}	& \gate{RX(a)} & \gate{R(c,d-\pi)} & \qw 
}
\hspace{1mm}\raisebox{-0em}{=}\hspace{2mm}
\Qcircuit @C=1em @R=1em {
	& \qw
}
\end{equation}
\begin{eqnarray*}
\text{where}\hspace{30mm} c := 2\arccos\left(\cos{a}\cdot\cos{\frac{b}{2}}\right), \\
d := \left\{ 
	\begin{array}{ll}
	\arcsin\left(\frac{\sin{\frac{b}{2}}}{\sqrt{1-\cos^2a\cdot\cos^2\frac{b}{2}}}\right) & \textrm{ if $a>0$} \\
	\pi-\arcsin\left(\frac{\sin{\frac{b}{2}}}{\sqrt{1-\cos^2a\cdot\cos^2\frac{b}{2}}}\right) & \textrm{ if $a<0$}. 
	\end{array} \right.
\end{eqnarray*} 
The above template may be used in multiple ways.
\begin{itemize}
\item Firstly, it allows the replacement of $RX(a)RY(b)RX(a)$ with $R(c,d)$.  The latter circuit always has a smaller duration and a higher overall fidelity (smaller contribution to the overall error), see Figure \ref{fig:tem}(a)(c).  
\item This template may be used to replace $RX(a)RY(b)$ with $R(c,d)RX(-a)$ and $RY(b)RX(a)$ with $RX(-a)$ $R(c,d)$.  This allows to trade off runtime for error, see Figure \ref{fig:tem}(b)(d) for the illustration of the changes in the runtime and error. While the runtime always increases, it is sometimes possible to improve the fidelity.  For instance, if $RX(\frac{\pi}{2})RY(-\frac{\pi}{2})$ in the circuit in Figure \ref{circ:cnot} were replaced with $R(\pi,-\frac{\pi}{4})RX(-\frac{\pi}{2})$ this would result in the cost vector change of that part of the computation from $(\tau_{1q},2 \times \epsilon)$ to $(1.5\tau_{1q}, \epsilon)$.  This constitutes an increase of the runtime by $0.5 \tau_{1q}$ over the increase of the fidelity from $(1-\epsilon)^2$ to $1- \epsilon$.  Since the above rule applies to any pair $RX$ and $RY$, it allows substantial flexibility in exchanging runtime for error.
\end{itemize}

In our approach to the optimization of quantum circuits we favour gate decompositions relying on $RX$, $RY$, and $XX$ gates.  The efficiency is evidenced through short elementary gate decompositions ($Z$ and its roots, Hadamard, CNOT, controlled-roots of Paulis), and favourable in-circuit gate cancellations, such as illustrated in Lemma \ref{lem:2}.  Template (\ref{eqn:single-qubit}) enables the next set of optimizations.  Specifically, given a circuit over $RX$, $RY$, and $XX$ gates, the template (\ref{eqn:single-qubit}) allows to ``commute'' arbitrary $RX$ to the right (or left) of every $RY$ met via replacing $RX(a)RY(b)$ with the properly defined $R(c,d)RX(-a)$. Observe that such ``commutation'' changes the sign of the parameter in $RX$, as such it is a special kind of commutation.  Recalling that $RX(a)RX(b) = RX(a+b \bmod 2\pi)$ and that $RX$ commutes with $XX$ allows to ``commute'' all $RX$ to the end (or beginning) of the circuit via replacing $RY$ gates with $R$ gates.  This reduces the number of $RX$ gates to at most one per qubit.  This result is furthermore summarized in the following Lemma. 

\vspace{1mm}\begin{lemma}\label{lem:3}
Any quantum physical-level circuit over $n$ qubits with $G_{XX}$ two-qubit $XX$ gates, $G_{RY}$ single-qubit $RY$ gates, and $G_{RX}$ single-qubit $RX$ gates can be reduced to an equivalent one with no more than $G_{XX}$ two-qubit $XX$ gates, $G_{RY}$ single-qubit $R$ gates, and at most $n$ single-qubit $RX$ gates. 
\end{lemma}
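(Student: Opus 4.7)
The plan is to use template (\ref{eqn:single-qubit}) together with the commutation $RX \cdot XX = XX \cdot RX$ (noted right after the definition of $XX(\chi)$) to sweep every $RX$ gate to the right end of its qubit wire, converting each $RY$ it passes into an $R$ gate along the way. Two facts carry the argument: (i) $RX(a)$ commutes through any $XX(\chi)$, so $XX$ gates are transparent barriers; and (ii) template (\ref{eqn:single-qubit}) rewrites $RX(a)RY(b) = R(c,d)RX(-a)$, which moves $RX$ one position to the right at the cost of replacing $RY$ with an $R$ and negating the $RX$ parameter.

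First I would process each qubit wire independently from left to right. Apply two local rewriting rules greedily until neither one is possible: \textbf{(M)} whenever two $RX$ gates sit adjacent (possibly across $XX$ gates, through which they freely commute), merge them via $RX(a)RX(b) = RX(a+b \bmod 2\pi)$, reducing the $RX$ count; \textbf{(S)} whenever an $RX$ immediately precedes an $RY$, apply template (\ref{eqn:single-qubit}) to replace them with $R(c,d)RX(-a)$. Rule (S) strictly increases the sum of right-positions of $RX$ gates in the circuit, while neither rule introduces new $RX$ or $RY$ occurrences, so the process terminates after finitely many steps.

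At termination no $RX$ lies to the left of any $RY$ on any wire, so every $RX$ has been pushed past all of its wire's $XX$ and $RY$ gates, collecting into a single tail $RX$ per qubit (by rule M). Counting: the $XX$ gates are never touched, so the final count is exactly $G_{XX}$. Every one of the $G_{RY}$ original $RY$ gates is either left untouched — in which case it \emph{is} an $R$ gate by (\ref{eq:ry}) — or consumed exactly once by an application of (S), producing one $R$ in its place; the total non-$RX$ $R$-gate count is therefore at most $G_{RY}$. After the final application of (M) on each wire, at most one $RX$ gate remains per qubit, for at most $n$ in total.

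The argument is almost entirely bookkeeping; the one point that warrants care is observing that an $R(c,d)$ produced by (S) is never again involved in a template rewrite. This holds because subsequent applications of (S) only require the template's input pattern $RX \cdot RY$, and the $R(c,d)$ gate sits to the \emph{left} of the moving $RX$ and never interacts with an $RY$ further right. Hence each original $RY$ contributes at most one gate to the final $R$-count, and the stated bounds follow.
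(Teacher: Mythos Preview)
Your proposal is correct and follows essentially the same approach as the paper, which gives only the informal sketch in the paragraph immediately preceding the lemma: push every $RX$ rightward through $XX$ (commutation) and through $RY$ (template~(\ref{eqn:single-qubit}), turning $RY$ into $R$ and flipping the $RX$ sign), then merge the accumulated $RX$'s into one per qubit. You have added the explicit rewrite rules, a termination justification, and the gate-counting bookkeeping that the paper leaves implicit, but the underlying argument is the same.
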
 


\section{Compiling quantum algorithms into physical-level circuits}\label{sec:compiler}

Define two circuit cost metrics, a coarse-grain and a fine-grain one.  The coarse-grain metric counts the number of the two-qubit controlled roots of Paulis in quantum circuits.  The fine-grain metric is described by the cost vector $(time,error)$, where $time$ sums up the runtimes across all gates used, and $error$ combines all errors.  Note that once the controlling apparatus allowing to execute gates in parallel is developed, the definition of $time$ will change into the sum of times across the critical path; other changes to the above costing metrics may also be accommodated, and depend on the improvements in the controlling apparatus and/or adjustments made to the error model. 

The following reports all steps taken by our overall design approach that maps a quantum algorithm into an optimized physical-level experiment. 
\begin{enumerate}
\item Choose an algorithm and map it into a high-level logical circuit with the help of a quantum programming language, if needed. 
\item Synthesize all arithmetic and oracle parts, if not explicitly supplied.  Arithmetic circuits are chosen from the known libraries, and oracles are synthesized using known reversible logic synthesis algorithms \cite{ar:sama}.  Optimize the resulting implementation over the coarse-grain cost metric \cite{ar:mdm}. 
\item Decompose the multiple-control Toffoli gates into smaller gates, such as three-qubit Toffoli gate and small relative phase Toffoli gates \cite{ar:m}.  Optimize circuits using peep-hole \cite{ar:pspmh} and templates \cite{ar:mdmn} over the coarse-grain cost metric.
\item Break down all gates into two-qubit controlled roots of Paulis and arbitrary single-qubit gates using optimal implementations \cite{bk:nc}, and optimize the resulting decompositions using templates \cite{ar:mdmn} over the coarse-grain metric.  At the end of this stage we should have reached the limit of optimization of the number of most expensive two-qubit gates, therefore we will next switch the gear and employ the fine-grain metric. 
\item Map logical qubits into physical qubits such as to minimize the use of the least desired interactions (those associated with the least two-qubit $XX$ gate fidelities), and maximize gate cancellations during further optimization.  To accomplish latter, record the position and the signs of all $RY(\pm \frac{\pi}{2})$ gates participating in the expansions of the powers of the controlled roots of Paulis that cannot be varied but depend of the sign of $\chi$ (\ref{circ:controlled-root}), and favour the selection of physical qubit mapping resulting in $RY$ cancellations on the control qubit.  The cancellation happens between two controls when there are no gates between them, and the signs of $\chi$ corresponding to the two $XX$ rotations are equal.  In general, so long as the number of qubits remains small, this can be done exhaustively; otherwise, a mix of subgraph isomorphism and greedy ({\em i.e.}, those making a local choice at each stage in hopes this has little effect on the global optimality) heuristics can be employed. 
\item Decompose further into physical-level circuit and optimize the resulting implementation: 
\begin{enumerate}
\item Perform controlled root of Pauli gate substitutions---those are now uniquely defined per formulas (\ref{circ:controlled-root}).  Decompose all single-qubit gates except Hadamard and $RZ$ into circuits over $RX/RY$. Choose CNOT (Figure \ref{circ:cnot}), controlled-Y (\ref{circ:cya-cza}), controlled-Z (\ref{circ:cz}), Hadamard (\ref{eq:h}), and $RZ$ (\ref{eq:rx-ry-rx}) gate decompositions such as to maximize the cancellations of pairs $RY(\pm \frac{\pi}{2})$ and $RY(\mp \frac{\pi}{2})$.  Perform cancellation of the $RY$ gates and combine the parameters of the neighbouring $RX/RY/XX$.
\item Apply template (\ref{eqn:single-qubit}) to reduce the selected figure of merit: to optimize gate count and error, ``commute'' $RX$ to one side of the circuit (incidentally, while error model was chosen to ensure a conflict of optimization across runtime and error, error optimization results in the reduction of the runtime; this is perhaps not surprising as $RX$ ``commutation'' has the effect of significantly reducing the $RX$ gate count, and both errors and runtime, as modelled, originate from the application of gates); to optimize the duration, find $RY(b)$ gates such that $RX$ with equal or similar parameters can be commuted to it from both left and right, and replace $RX(a)RY(b)RX(a)$ with proper $R(c,d)$.
\item Perform further balancing of runtime vs error using the template (\ref{eqn:single-qubit}) until the desired balance is found or no more improvement can be achieved.  
\item If the single-qubit gate sequences of length $3$ and higher are found, replace them with $2-R$ gate sequences, per formula (\ref{eq:any-single-qubit-gate}).  Rewrite all remaining single-qubit $RX$ and $RY$ gates as the physical-level $R$ pulses. 
\end{enumerate}
\end{enumerate}
The bulk of work and the optimization not previously explicitly considered in the literature falls into steps 5 and 6 of the above approach.  The complexity of the algorithms employed in step 6 is described by a low degree polynomial in the number of gates in the circuit (the degree of the polynomial is one in the simplest case of greedy algorithms).  The complexity of the algorithms used in step 5 depends on the efficiency of heuristics employed.  Steps 1-4 rely on a combination and a modification of the known techniques.

\section{Benchmark results}\label{sec:benchmarks}

The physical trapped ions machine we have access to \cite{ar:deb} currently has the following parameters: $\tau_{1q} = 20 \mu s$ and $\tau_{2q} = 235 \mu s$.  The single-qubit and two-qubit gate errors are approximately $\epsilon\sim 0.01$ and $E \sim 0.04$.  The value of $E$ may furthermore vary slightly depending on the set of qubits the gate is being applied to.  The signs of $\chi$, specifically, $\chi_{i,j}$, depending on the particular interaction between qubits $i$ and $j$ used, are as follows: $\chi$ has a positive sign for the interactions (next showing pairs of qubit numbers from the set of five, $\{1,2,3,4,5\}$) 12, 14, 23, 25, 34, 35, and 45.  The sign is negative for the interactions 13, 15, and 24.  Due to the values of physical errors, we expect to be able to apply circuits containing no more than about 15 two-qubit gates, therefore the bulk of work in this section is devoted to developing and optimizing experiments that satisfy the above condition.  In particular, our goal in this section is to propose experiments maximally utilizing the capabilities of the {\em specific} trapped ions machine \cite{ar:deb}, as well as to design those experiments with maximal efficiency while following precisely those procedures described in the previous sections.  In the coming subsection we illustrate how the above circuit compilation techniques accomplish the task of implementing the Boolean multiplication, and then report the results of the design of advanced experiments.  Majority of the computations proposed here scale beyond those previously demonstrated in an experiment. 

A relevant work on the circuit optimization for trapped ions technology was performed in \cite{ar:nhr}, that relied on the use of gradient ascend type algorithm to optimize the gate sequences.  Our optimization is based on establishing the relation between gate decompositions, and applying the template (\ref{eqn:single-qubit}), and therefore we expect the computational complexity as well as the practical efficiency of our algorithms to be better compared to the gradient-ascend type techniques.  Specifically, the algorithm for single-qubit circuit optimization with the purpose of error or gate count reduction employed in our work is described by the linear function of the number of gates, and the algorithm for the two-qubit gate optimization is described by at most qubic term, and furthermore allows the reduction to a linear complexity with the minimal loss to the quality of the output \cite{ar:mdmn}.  While the physical-level gates used in our work appear to be different from the physical-level gates employed in \cite{ar:nhr}, and no direct comparison can be made, we note that our CNOT implementation (Figure \ref{circ:cnot}) contains $1$ $XX$ gate and $4$ single-qubit pulses, whereas \cite[Section IV.A]{ar:nhr} reports a 2-$XX$ and 8 single-qubit pulse implementation, and our physical-level implementation of the circuit CNOT$[1,2]$CNOT$[1,3]=RY[1](-\frac{\pi}{2}).RX[3](-\frac{\pi}{2}).XX[1,2](-\frac{\pi}{4}).XX[1,3](\frac{\pi}{4}).RY[1](\frac{\pi}{2}).RX[2](\frac{\pi}{2})$ contains 2 $XX$ gates and 4 single-qubit pulses, whereas \cite[Section IV.A]{ar:nhr} reports a 2-$XX$ and 7 single-qubit pulse implementation.

A recent paper \cite{ar:mmns} reports a numeric optimization approach to designing trapped ions circuits over global Molmer-Sorenson gates (defined as $XX_{i,j}(\chi)$ applied to every pair of ions $i$ and $j$, with controllable parameter $\chi$), global $RX$ and $RY$ gates, as well as local single-qubit $RZ$ rotations.  In contrast, our work focuses on quantum computations by local gates.  The two (local vs global) are very different types of control and the corresponding circuits are incomparable: for instance, observe that our circuits work independently of the number of qubits the computation runs over, whereas circuits in \cite{ar:mmns} in general depend on the number of qubits used.  This said, global control can be used in a way independent of the number of qubits involved in the computation.  Specifically, an arbitrary local entangling \textsc{CNOT} gate can be constructed with the use of at least two global Molmer-Sorenson interactions and some number of single-qubit gates, allowing to express all quantum algorithms (majority of which are, in fact, described in terms of local operations) using global entangling gates.  However, the number of global Molmer-Sorenson gates used would then be twice as much as the number of local Molmer-Sorenson gates required to accomplish the same task using local control.

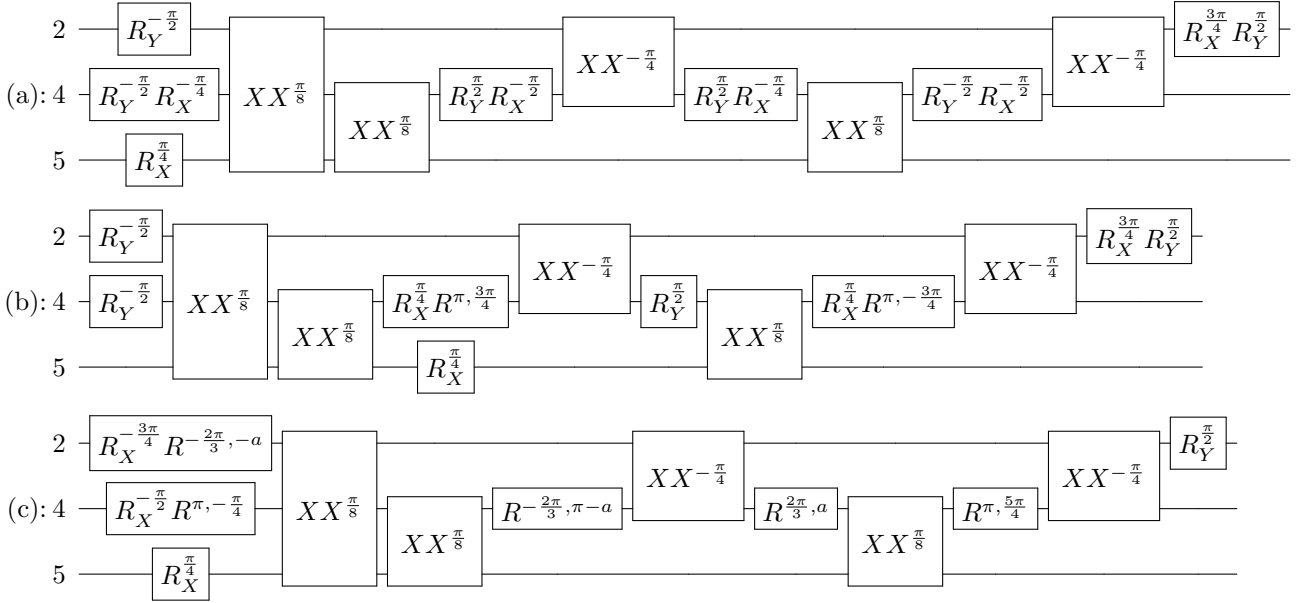
\begin{figure*}[t]
\centerline{
\begin{tabular}{rl}
\raisebox{-2.8em}{(a):} &
\Qcircuit @C=0.4em @R=0.4em @!R {
\lstick{2} & \gate{R_Y^{-\frac{\pi}{2}}}	& \multigate{2}{XX^{\frac{\pi}{8}}}    & \qw 								& \qw 															& \multigate{1}{XX^{-\frac{\pi}{4}}}  	& \qw & \qw 								& \qw											& \multigate{1}{XX^{-\frac{\pi}{4}}}	& \gate{R_X^{\frac{3\pi}{4}}R_Y^{\frac{\pi}{2}}} & \qw \\
\lstick{4} & \gate{R_Y^{-\frac{\pi}{2}}R_X^{-\frac{\pi}{4}}} & \ghost{XX^{\frac{\pi}{8}}} 			& \multigate{1}{XX^{\frac{\pi}{8}}}  & \gate{R_Y^{\frac{\pi}{2}}R_X^{-\frac{\pi}{2}}}  & \ghost{XX^{-\frac{\pi}{4}}}			& \gate{R_Y^{\frac{\pi}{2}}R_X^{-\frac{\pi}{4}}} & \multigate{1}{XX^{\frac{\pi}{8}}}	& \gate{R_Y^{-\frac{\pi}{2}}R_X^{-\frac{\pi}{2}}} 	& \ghost{XX^{-\frac{\pi}{4}}}		& \qw				& \qw \\
\lstick{5} &  \gate{R_X^{\frac{\pi}{4}}} 					& \ghost{XX^{\frac{\pi}{8}}} 			& \ghost{XX^{\frac{\pi}{8}}}			& \qw 															& \qw 		& \qw & \ghost{XX^{\frac{\pi}{8}}}			& \qw 		& \qw 			& \qw 			& \qw
} \vspace{3mm} \\ 
\raisebox{-2.8em}{(b):} &
\Qcircuit @C=0.4em @R=0.4em @!R {
\lstick{2} & \gate{R_Y^{-\frac{\pi}{2}}}	& \multigate{2}{XX^{\frac{\pi}{8}}}    & \qw 								& \qw 															& \multigate{1}{XX^{-\frac{\pi}{4}}}  	& \qw & \qw 								& \qw											& \multigate{1}{XX^{-\frac{\pi}{4}}}	& \gate{R_X^{\frac{3\pi}{4}}R_Y^{\frac{\pi}{2}}} & \qw \\
\lstick{4} & \gate{R_Y^{-\frac{\pi}{2}}} & \ghost{XX^{\frac{\pi}{8}}} 			& \multigate{1}{XX^{\frac{\pi}{8}}}  & \gate{R_X^{\frac{\pi}{4}}R^{\pi,\frac{3\pi}{4}}}  & \ghost{XX^{-\frac{\pi}{4}}}			& \gate{R_Y^{\frac{\pi}{2}}} & \multigate{1}{XX^{\frac{\pi}{8}}}	& \gate{R_X^{\frac{\pi}{4}}R^{\pi,-\frac{3\pi}{4}}} 	& \ghost{XX^{-\frac{\pi}{4}}}		& \qw				& \qw \\
\lstick{5} & \qw 					& \ghost{XX^{\frac{\pi}{8}}} 			& \ghost{XX^{\frac{\pi}{8}}}			&  \gate{R_X^{\frac{\pi}{4}}}		& \qw 		& \qw & \ghost{XX^{\frac{\pi}{8}}}			& \qw 		& \qw 			& \qw 			& \qw
} \vspace{3mm} \\
\raisebox{-2.8em}{(c):} &
\Qcircuit @C=0.4em @R=0.4em @!R {
\lstick{2} & \gate{R_X^{-\frac{3\pi}{4}}R^{-\frac{2\pi}{3},-a}}	& \multigate{2}{XX^{\frac{\pi}{8}}}    & \qw 								& \qw 															& \multigate{1}{XX^{-\frac{\pi}{4}}}  	& \qw & \qw 								& \qw											& \multigate{1}{XX^{-\frac{\pi}{4}}}	& \gate{R_Y^{\frac{\pi}{2}}} & \qw \\
\lstick{4} & \gate{R_X^{-\frac{\pi}{2}}R^{\pi,-\frac{\pi}{4}}} & \ghost{XX^{\frac{\pi}{8}}} 			& \multigate{1}{XX^{\frac{\pi}{8}}}  & \gate{R^{-\frac{2\pi}{3},\pi-a}}  & \ghost{XX^{-\frac{\pi}{4}}}			& \gate{R^{\frac{2\pi}{3},a}} & \multigate{1}{XX^{\frac{\pi}{8}}}	& \gate{R^{\pi,\frac{5\pi}{4}}} 	& \ghost{XX^{-\frac{\pi}{4}}}		& \qw				& \qw \\
\lstick{5} &  \gate{R_X^{\frac{\pi}{4}}} 					& \ghost{XX^{\frac{\pi}{8}}} 			& \ghost{XX^{\frac{\pi}{8}}}			& \qw 															& \qw 		& \qw & \ghost{XX^{\frac{\pi}{8}}}			& \qw 		& \qw 			& \qw 			& \qw
}
\end{tabular}
}
\caption{Toffoli gate on ions 2, 4 (controls), and 5 (target) implemented as a physical-level trapped ions circuit (rotation angles showing as superscripts) up to an undetectable global phase: (a) Toffoli gate implementation optimized for time alone, with gate parameters aggregated; (b) circuit in (a) optimized for time as primary parameter, and error as secondary parameter; (c) circuit (a) optimized for error as primary parameter, and time as secondary parameter, where the value $a:=\arcsin{\sqrt{\frac{2}{3}}} \approx 0.304087\pi$. The runtimes and errors ($e_1$) are: \newline
(a) $1285$ $\mu s$, $4\times 0.707107\epsilon +  \hspace{25mm} 8\times\epsilon + 3 \times 0.707107E + 2 \times E$;\newline
(b) $1285$ $\mu s$, $4\times 0.707107\epsilon +  \hspace{25mm} 4\times\epsilon + 3 \times 0.707107E + 2 \times E$;\newline
(c) $1295$ $\mu s$, $\hspace{0.3mm}2\times 0.707107\epsilon + 3\times 0.866025\epsilon + 2\times\epsilon + 3 \times 0.707107E + 2 \times E$.
} \label{circ:tofxxrxry}
\end{figure*}

\subsection{Implementing Boolean multiplication}

To implement the Boolean multiplication on the trapped ions machine our algorithm takes the following steps.

\noindent {\it 1-4.} Steps {\it 1-3} identify that the computation we wish to perform is given by the Toffoli gate, $\textsc{TOF}[a,b;c]$, and step {\it 4} finds the following 5 two-qubit gate circuit implementing the Toffoli gate \cite{ar:bbcd} using CNOT and controlled-$\sqrt{NOT}$ gates,
\[
\Qcircuit @C=0.6em @R=1.05em @!R {
& \ctrl{1} 	& \qw \\
& \ctrl{1} 	& \qw \\
& \targ 	& \qw
}
\hspace{1mm}\raisebox{-1.9em}{=}\hspace{1mm}
\Qcircuit @C=0.7em @R=0.4em @!R {
& \ctrl{2} 		& \qw 			& \ctrl{1}	& \qw				& \ctrl{1} 	&\qw \\
& \qw			& \ctrl{1} 		& \targ		& \ctrl{1}			& \targ		&\qw \\
& \gate{V} 		& \gate{V} 		& \qw 		& \gate{V^\dagger}	& \qw 		&\qw
}
\] 

\noindent {\it 5.} Physical qubits from the set $\{1,2,3,4,5\}$ are mapped onto logical qubits: we choose 2, 4, and 5 (top to bottom), such as to rely on the high fidelity interactions \cite[Table 1]{ar:deb}.  The controlled roots-of-NOT gates do not feature neighbouring controls, imposing no further conditions on the mapping. 

\noindent {\it 6.}
\begin{enumerate}
\vspace{-6.3mm}\item[a.] Now that the signs of $\chi_{i,j}$ are known, mark the controls of the controlled-$V/V^{\dagger}$ gates with $+/-$ by the sign of the $RY$ rotation that appears on it when decomposed into physical pulses using formulas (\ref{circ:controlled-root}),
\[
\Qcircuit @C=0.7em @R=0.4em @!R {
\lstick{2} & \ctrl{2}^{\hspace{3mm}\mbox{\tiny{-}\hspace{2mm}\tiny{+}}\hspace{-3mm}} 		& \qw 						& \ctrl{1}	& \qw				& \ctrl{1} 	& \qw \\
\lstick{4} & \qw			& \ctrl{1}^{\hspace{4mm}\mbox{\tiny{-}\hspace{2mm}\tiny{+}}\hspace{-4mm}}	& \targ		& \ctrl{1}^{\hspace{3.5mm}\mbox{\tiny{+}\hspace{2mm}-}\hspace{-3.5mm}}			& \targ		&\qw \\
\lstick{5} & \gate{V} 		& \gate{V} 					& \qw 		& \gate{V^\dagger}	& \qw 		& \qw
}
\] 
and then choose the CNOT gate decompositions (Figure \ref{circ:cnot}) such as to maximize $RY$ cancellations,
\[
\Qcircuit @C=0.7em @R=0.4em @!R {
\lstick{2} & \ctrl{2}^{\hspace{3mm}\mbox{\tiny{-}\hspace{2mm}\tiny{+}}\hspace{-3mm}} 		& \qw 						& \ctrl{1}^{\hspace{3.5mm}\mbox{\tiny{-}\hspace{2mm}\tiny{+}}\hspace{-3.5mm}}	& \qw				& \ctrl{1}^{\hspace{4mm}\mbox{\tiny{-}\hspace{2mm}\tiny{+}}\hspace{-4mm}} 	& \qw \\
\lstick{4} & \qw			& \ctrl{1}^{\hspace{4mm}\mbox{\tiny{-}\hspace{2mm}\tiny{+}}\hspace{-4mm}}	& \targ		& \ctrl{1}^{\hspace{3.5mm}\mbox{\tiny{+}\hspace{2mm}-}\hspace{-3.5mm}}			& \targ		&\qw \\
\lstick{5} & \gate{V} 		& \gate{V} 					& \qw 		& \gate{V^\dagger}	& \qw 		& \qw
}
\] 
The above choice for the CNOT decompositions allows to cancel two pairs of $RY(\frac{\pi}{2})/RY(-\frac{\pi}{2})$ gates on the first qubit, as is evidenced by the ``meeting'' plus and minus signs.
\item[b.] Perform gate substitutions and cancellations to obtain the circuit shown in Figure \ref{circ:tofxxrxry}(a).
\item[c.] The rest of the single-qubit optimization algorithm is based on the template (\ref{eqn:single-qubit}), and works differently depending on the criteria for the remaining optimization.  If the runtime is the goal, the algorithm tries to find triples of gates $RX-RY-RX$ that may be replaced with the $R$ gate, such as to minimize the overall duration.  This allows to construct the circuit pictured in Figure \ref{circ:tofxxrxry}(b).  In case if minimizing the error is preferred, the algorithm ``commutes'' all $RX$ to the left.  The result of this optimization may be found in Figure \ref{circ:tofxxrxry}(c).  We conclude the optimization by layering the single-qubit gates sharing the same duration, by moving $RX$, whenever possible, such as to allow their sequential execution---this helps to optimize classical controlling sequences. 
\end{enumerate} 

Observe that at the stage of the decomposition of the two-qubit logical gates into implementable physical-level gates, the single-qubit pulse count was $20$.  The parametrized CNOT implementation per Figure \ref{circ:cnot}, along with the algorithm for joint gate decomposition, and $RX$ gates commutation over the application of template (\ref{eqn:single-qubit}) allowed to reduce the single-qubit pulse count from the original $20$ down to $9$ (Figure \ref{circ:tofxxrxry}(c)).  Had the previously known circuitry implementing the CNOT gate with two more single-qubit pulses \cite{ar:deb, arXiv:1508.03392} been used, the original unoptimized single-qubit gate count would have been $30$.  The upper bound on the number of single-qubit gates in a circuit of this size, as given by the application of Lemma \ref{lem:1}, is $26$.  Had the efficient controlled-root-of-Pauli implementation introduced in this paper been not used, the resource count would have been substantially higher---not only the single-qubit gates, but this time, two-qubit gates as well \cite{ar:sm}.  This illustrates the power of the approach reported in this paper.

Our Toffoli gate implementation, Figure \ref{circ:tofxxrxry}(b), takes time $1285$ $\mu s$, which compares favourably to $1500$ $\mu s $ reported in \cite{ar:mkh}.  The advantage in the duration, however, is attributed to the different hardware \cite{ar:deb} that we rely on in our work.  The difference in the fidelity between our implementation and the one reported in \cite{ar:mkh} needs to be established via the experiment.  The major difference between our implementation and the one reported in \cite{ar:mkh} is our circuit is a true Toffoli gate that can be used (and in fact is used in the next subsection) as a primitive in the implementation of quantum algorithms, whereas \cite{ar:mkh} reports a Toffoli gate implemented up to a relative phase; such an implementation up to the relative phase may not be used in quantum algorithms directly---specifically, it would give an incorrect answer if used within Grover's search \cite{co:g}.

\subsection{Advanced experiments}

\begin{table*}[t]
\centering
{\footnotesize
\begin{tabular}{|l|c|c|c|r|}
\hline
\hspace{-1mm}Function 				& \hspace{-2mm}\#q\hspace{-2mm} & \hspace{-3mm}1q/2qg\hspace{-3mm} & Time 			& Error \\ \hline
\hspace{-1mm}Grover$^{\{011,111\}}$\hspace{-1.5mm} 	& 4 & 29/10	& \hspace{-1.5mm}2743$\mu s$\hspace{-1.5mm}	& $e_1 = 6 \times 0.707107\epsilon + 4 \times 0.866025\epsilon + 9 \times\epsilon + 6 \times 0.707107E + 4 \times E$ \\ 
&&&& $e_2 = 4 \times \frac{\pi\epsilon}{4} +  9 \times \frac{\pi\epsilon}{2} +  3 \times \frac{2\pi\epsilon}{3} + 3 \times \frac{3\pi\epsilon}{4} +  10 \times \pi\epsilon + 10 \times E$ \\ \hline

\hspace{-1mm}Grover$^{\{011,101\}}$\hspace{-1.5mm} 	& 4 & 31/12	& \hspace{-1.5mm}3250$\mu s$\hspace{-1.5mm}	& $e_1 = 6 \times 0.707107\epsilon + 3 \times 0.866025\epsilon + 10\times\epsilon + 6 \times 0.707107E + 6 \times E$ \\ 
&&&& $e_2 = 4 \times \frac{\pi\epsilon}{4} +  10 \times \frac{\pi\epsilon}{2} +  3 \times \frac{2\pi\epsilon}{3} + 2 \times \frac{3\pi\epsilon}{4} +  12 \times \pi\epsilon + 12 \times E$ \\ \hline

\hspace{-1mm}Grover$^{\{010,100\}}$\hspace{-1.5mm} 	& 4 & 32/12	& \hspace{-1.5mm}3280$\mu s$\hspace{-1.5mm}	& $e_1 = 6 \times 0.707107\epsilon + 3 \times 0.866025\epsilon + 10\times\epsilon + 6 \times 0.707107E + 6 \times E$ \\ 
&&&& $e_2 = 3 \times \frac{\pi\epsilon}{4} +  10 \times \frac{\pi\epsilon}{2} +  3 \times \frac{2\pi\epsilon}{3} + 3 \times \frac{3\pi\epsilon}{4} +  13 \times \pi\epsilon + 12 \times E$ \\ \hline

\hspace{-1mm}Grover$^{\{000,111\}}$\hspace{-1.5mm} 	& 4 & 31/13	& \hspace{-1.5mm}3492$\mu s$\hspace{-1.5mm}	& $e_1 = 4 \times 0.707107\epsilon + 5 \times 0.866025\epsilon + 10\times\epsilon + 6 \times 0.707107E + 7 \times E$ \\ 
&&&& $e_2 = 3 \times \frac{\pi\epsilon}{4} +  10 \times \frac{\pi\epsilon}{2} + 5 \times \frac{2\pi\epsilon}{3} + 1 \times \frac{3\pi\epsilon}{4} + 12 \times \pi\epsilon + 13 \times E$ \\ \hline

\hspace{-1mm}QFT4					& 4 & 13/6	& \hspace{-1.5mm}1582$\mu s$\hspace{-1.5mm}	& $e_1 = 1 \hspace{-0.5mm}\times\hspace{-0.5mm} 0.273262\epsilon + 1 \hspace{-0.5mm}\times\hspace{-0.5mm} 0.382683\epsilon + 2 \hspace{-0.5mm}\times\hspace{-0.5mm} 0.521005\epsilon + 1 \hspace{-0.5mm}\times\hspace{-0.5mm} 0.866025 \epsilon + 1 \hspace{-0.5mm}\times\hspace{-0.5mm} 0.92388\epsilon $ \\
&&&& $+2\hspace{-0.5mm}\times\hspace{-0.5mm} 0.980785\epsilon + 4\hspace{-0.5mm}\times\hspace{-0.5mm}\epsilon + 1\hspace{-0.5mm}\times\hspace{-0.5mm} 0.19509E + 2\hspace{-0.5mm}\times\hspace{-0.5mm} 0.382683E + 3\hspace{-0.5mm}\times\hspace{-0.5mm} 0.707107E$ \\

&&&& $e_2 = 1 \hspace{-0.5mm}\times\hspace{-0.5mm} \frac{3\pi\epsilon}{8} + 4 \hspace{-0.5mm}\times\hspace{-0.5mm} \frac{\pi\epsilon}{2} + 2 \hspace{-0.5mm}\times\hspace{-0.5mm} \frac{9\pi\epsilon}{16} + 1 \hspace{-0.5mm}\times\hspace{-0.5mm} \frac{2\pi\epsilon}{3} + 2 \hspace{-0.5mm}\times\hspace{-0.5mm} 0.825557\pi\epsilon + 1 \hspace{-0.5mm}\times\hspace{-0.5mm} \frac{7\pi\epsilon}{8}$ \\
&&&& $ + 1 \hspace{-0.5mm}\times\hspace{-0.5mm} 0.911896\pi\epsilon + 1 \hspace{-0.5mm}\times\hspace{-0.5mm} \pi\epsilon + 6 \times E$ \\ \hline

\hspace{-1mm}QFT5 					& 5 & 22/10	& \hspace{-1.5mm}2669$\mu s$\hspace{-1.5mm}	& $e_1 = 1 \hspace{-0.5mm}\times\hspace{-0.5mm}0.138284\epsilon + 1 \hspace{-0.5mm}\times\hspace{-0.5mm}0.273262\epsilon + 1 \hspace{-0.5mm}\times\hspace{-0.5mm} 0.521005\epsilon + 1 \hspace{-0.5mm}\times\hspace{-0.5mm} 0.55557\epsilon + 1\hspace{-0.5mm}\times\hspace{-0.5mm} 0.707107e$ \\
&&&& $ + 1\hspace{-0.5mm}\times\hspace{-0.5mm} 0.722528\epsilon \hspace{-0.5mm}+\hspace{-0.5mm} 1\hspace{-0.5mm}\times\hspace{-0.5mm} 0.831147\epsilon \hspace{-0.5mm}+\hspace{-0.5mm} 2\hspace{-0.5mm}\times\hspace{-0.5mm} 0.866025\epsilon \hspace{-0.5mm}+\hspace{-0.5mm} 1 \hspace{-0.5mm}\times\hspace{-0.5mm} 0.951173\epsilon \hspace{-0.5mm}+\hspace{-0.5mm} 2 \hspace{-0.5mm}\times\hspace{-0.5mm} 0.995185\epsilon$ \\
&&&& $ + 4\hspace{-0.5mm}\times\hspace{-0.5mm} \epsilon + 1\hspace{-0.5mm}\times\hspace{-0.5mm} 0.098017E + 2\hspace{-0.5mm}\times\hspace{-0.5mm} 0.19509E + 3\hspace{-0.5mm}\times\hspace{-0.5mm} 0.382683E + 4\hspace{-0.5mm}\times\hspace{-0.5mm} 0.707107E$ \\

&&&& $e_2 =  1 \hspace{-0.5mm}\times\hspace{-0.5mm} \frac{3\pi\epsilon}{16} + 2 \hspace{-0.5mm}\times\hspace{-0.5mm} \frac{15\pi\epsilon}{32} + 4 \hspace{-0.5mm}\times\hspace{-0.5mm} \frac{\pi\epsilon}{2} + 1 \hspace{-0.5mm}\times\hspace{-0.5mm} 0.59988\pi\epsilon + 2 \hspace{-0.5mm}\times\hspace{-0.5mm} \frac{2\pi\epsilon}{3} +  1 \hspace{-0.5mm}\times\hspace{-0.5mm} \frac{11\pi\epsilon}{16}$ \\
&&&& $+ 1 \hspace{-0.5mm}\times\hspace{-0.5mm} 0.74298\pi\epsilon + 1 \hspace{-0.5mm}\times\hspace{-0.5mm} \frac{3\pi\epsilon}{4} +  1 \hspace{-0.5mm}\times\hspace{-0.5mm} 0.825557\pi\epsilon + 1 \hspace{-0.5mm}\times\hspace{-0.5mm} 0.911896\pi\epsilon$ \\
&&&& $ + 1 \hspace{-0.5mm}\times\hspace{-0.5mm} 0.955841\pi\epsilon + 6 \hspace{-0.5mm}\times\hspace{-0.5mm} \pi\epsilon + 10 \times E$ \\ \hline

\hspace{-1mm}Toffoli-4 				& 5 & 21/11	& \hspace{-1.5mm}2832$\mu s$\hspace{-1.5mm}	& $e_1 = 8 \times 0.707107\epsilon + 2 \times 0.866025\epsilon + 4\times\epsilon + 3 \times 0.707107E + 8 \times E$ \\
&&&& $e_2 = 8 \times \frac{\pi\epsilon}{4} +  4 \times \frac{\pi\epsilon}{2} +  2 \times \frac{2\pi\epsilon}{3} + 7 \times \pi\epsilon + 11 \times E$ \\ \hline
\end{tabular}
}
\caption{Benchmark circuits. Errors shown per both definitions of the single- and two-qubit $R/XX$ gate error models, as described in (\ref{eqn:ed},\ref{eqn:2ed}).}
\label{tab:benchmarks}
\end{table*}

Table \ref{tab:benchmarks} reports the result of the application of the above techniques to the design and optimization of circuits implementing advanced quantum computational experiments.  None of the implementations proposed in Table \ref{tab:benchmarks} have yet been demonstrated in an experiment.  Table \ref{tab:benchmarks} lists the name of the algorithm/function, the number of qubits the developed implementation uses, the number of physical-level single-qubit/two-qubit pulses, the overall runtime of the circuit, and all sources of errors.  All implementations reported in Table \ref{tab:benchmarks} are true implementations of the respective algorithms/functions, in that we report exact unitaries (VS those up to a relative phase), treat black boxes as black boxes (no optimizations crossing black box boundaries), as well as precisely follow the formulation of the respective algorithms and specifications. Our computations are thus properly scalable to accept larger numbers of qubits.  The quantum state is furthermore initialized to $\ket{00000}$ (as opposed to a state containing partial results of a computation) before any of the circuits are applied.  

For Grover's algorithm, we considered the scenario when the three-bit Boolean function $f(x)=f(x_1,x_2,x_3)$ implemented as Grover's oracle, $\ket{x,y} \mapsto \ket{x, y \oplus f(x)}$, marks some two items in the database.  There are 28 such functions.  The superscript in the name of the Grover's function in Table \ref{tab:benchmarks} indicates the bit strings encoded by the oracle the search over which is being reported.  The efficiency of the Grover's circuit is determined by the efficiency of the implementation of the oracle, which in turn is determined by the Hamming distance between those items it marks.  We treat the oracle as a black box, and do not allow optimizations across the boundary of the black box.  While we can implement Grover's algorithm over any oracle marking two items, only a few representative circuits are actually included in the Table.  To implement Grover's algorithm over a Boolean function marking one item, the oracle needs to be a Toffoli-4 gate (the triply-controlled Toffoli, $\textsc{TOF}[a,b,c;d]$).  We calculated that the single Grover's iteration requires 16 two-qubit gates.  The algorithmic probability of reading out the correct answer upon applying a single Grover's iterate is $0.78125$, which beats the classical probability of finding the answer with the single query, $0.125$, therefore, it may be interesting to run such an experiment, as well. 

Grover's algorithm may furthermore be implemented using the oracle function $g$ computing the unknown Boolean function $f$ into phase, $g:\ket{x} \mapsto (-1)^{f(x)}\ket{x}$ \cite{bk:nc}.  For the 2-out-of-8 search the function $g$ can be thought of as a $Z/CZ$ circuit---see Lemma \ref{lem:2} outlining the construction of efficient $Z/CZ$ circuits.  The number of $CZ$ gates used ranges between one and three.  This means that the entire Grover's search with such a phase oracle can be implemented using between $6$ and $8$ two-qubit $XX$ gates.  A 1-out-of-8 search with the single Grover's iterate and algorithmic success probability of $0.78125$ over phase oracle can be accomplished with $10$ two-qubit $XX$ gates.  Finally, some 4-qubit Grover's searches may be possible to demonstrate.  Specifically, the search for phase items $\{1110,1111\}$ can be accomplished by a trapped ions circuit with $16$ two-qubit $XX$ gates over algorithmic success probability of $0.78125$.

The QFT5 circuit reported in Table \ref{tab:benchmarks} can be compared head-to-head to the one found in \cite{ar:deb}.  Specifically, both circuits are true \cite[Figure 5.1]{bk:nc} (as opposed to semiclassical, \cite{quant-ph/9511007}) implementations of the 5-qubit QFT, featuring 10 two-qubit gates.  Our circuit benefited from careful design, and as a result features the single-qubit gate count of just $22$ compared to $70$ in \cite{ar:deb}.  This constitutes the reduction of the single-qubit gate count by a factor of more than three, clearly illustrating the benefits of our approach.  

To our knowledge, the Toffoli-4 gate circuit reported in Table \ref{tab:benchmarks} is the first such containing no more than 11 two-qubit gates.  Previous best result is 12 CNOTs \cite{ar:m}. 

\section{Conclusion}
In this paper we reported a complete strategy for automatic execution of quantum algorithms on a trapped ions quantum machine.  Our contributions include the design of the complete data flow from the algorithm level down to the physical level, algorithms for circuit cost optimization---specifically, due to combining decompositions of gates such as to enforce gate cancellations and single-qubit gate optimization, and physical-level designs of quantum logical gates and computational primitives.  In particular, we demonstrated simplified designs of computational primitives suitable for a range of QIP proposals relying on the physical control provided by the $R$ and $XX$ gates, and designed optimized computational experiments suitable for the execution on a specific machine available in the lab.  We furthermore note that the circuit optimization techniques developed in this paper are basic and generic---the key limitation is the reliance on the specific gate library, and thus can be modified and applied in conjunction with numerous optimization criteria, or even a over different QIP platform. 

Our results help to bridge the gap between quantum computational experiments and a fully-fledged quantum computer: indeed, our approach allows to automatically design and execute quantum algorithms on the existing hardware, which may be described as programming a quantum computer.

\section{Acknowledgements}

I thank Prof. Christopher Monroe from the University of Maryland--College Park for discussions and help in the preparation of this manuscript.

Circuit diagrams were drawn using qcircuit.tex package, \href{http://physics.unm.edu/CQuIC/Qcircuit/}{http://physics.unm.edu/CQuIC/Qcircuit/}.

This material was based on work supported by the National Science Foundation, while working at the Foundation. Any opinion, finding, and conclusions or recommendations expressed in this material are those of the author and do not necessarily reflect the views of the National Science Foundation.


\begin{thebibliography}{2}
\bibliographystyle{alpha}

\bibitem{bk:nc} M. A. Nielsen and I. L. Chuang, {\em Quantum Computation and Quantum Information}, Cambridge University Press, New York, 2000.

\bibitem{ar:div} D. P. DiVincenzo,``The physical implementation of quantum computation'', Fortschritte der Physik {\bf 48}(9-11):771--783, 2000, \href{http://arxiv.org/abs/quant-ph/0002077}{{\ttfamily quant-ph/0002077}}

\bibitem{ar:dh} M. H. Devoret and R. J. Schoelkopf, ``Superconducting circuits for quantum information: an outlook'', Science {\bf 339}, 1169--1174, 2013.

\bibitem{www:IBM} IBM Research Quantum Experience, \href{http://www.research.ibm.com/quantum/}{http://www.research.ibm.com/quantum/}, last accessed September 30, 2016.

\bibitem{ar:blawi} R. Blatt and D. J. Wineland, ``Entangled states of trapped atomic ions'', Nature {\bf 453}, 1008--1015, 2008.

\bibitem{ar:deb} S. Debnath, N. M. Linke, C. Figgatt, K. A. Landsman, K. Wright, and C. Monroe, ``Demonstration of a programmable quantum computer module'', Nature {\bf 536}, 63--66, 2016, \href{http://arxiv.org/abs/1603.04512}{{\ttfamily arXiv:1603.04512}}.

\bibitem{ar:ladd} T. D. Ladd, F. Jelezko, R. Laflamme, Y. Nakamura, C. Monroe, and J. L. O’Brien, ``Quantum computers'', Nature {\bf 464}, 45--53, 2010, \href{http://arxiv.org/abs/1009.2267}{{\ttfamily arXiv:1009.2267}}.

\bibitem{ar:fslc} P. T. H. Fisk, M. J. Sellars, M. A. Lawn, and G. Coles, ``Accurate measurement of the $12.6$ GHz ``clock'' transition in trapped $^{171}$Yb$^{+}$ ions'', IEEE Trans. Ultrasonics, Ferroelectrics, and Frequency Control {\bf 44}(2), 344--354, 1997.

\bibitem{pers} S. Debnath, N. M. Linke, C. Figgatt, K. A. Landsman, and C. Monroe, {\em personal communication}, October 15, 2015.

\bibitem{ar:bskw} J. W. Britton, B. C. Sawyer, A. C. Keith, C.-C. J. Wang, J. K. Freericks, H. Uys, M. J. Biercuk, J. J. Bollinger, ``Engineered two-dimensional Ising interactions in a trapped-ion quantum simulator with hundreds of spins'', Nature {\bf 484}, 489--492, 2016, \href{https://arxiv.org/abs/1204.5789}{{\ttfamily arXiv:1204.5789}}.

\bibitem{ar:lhn} B. Lanyon, C. Hempel, D. Nigg, M. Müller, R. Gerritsma, F. Zahringer, P. Schindler, J. T. Barreiro, M. Rambach, G. Kirchmair, M. Hennrich, P. Zoller, R. Blatt, C. F. Roos, ``Universal digital quantum simulation with trapped ions'', Science {\bf 334}(6052), 57--61, 2011, \href{https://arxiv.org/abs/1109.1512}{{\ttfamily arXiv:1109.1512}}.

\bibitem{arXiv:1604.08840} A. C. Lee, J. Smith, P. Richerme, B. Neyenhuis, P. W. Hess, J. Zhang, and C. Monroe, ``Engineering large Stark shifts for control of individual clock state qubits'', Phys. Rev. A {\bf 94}, 042308, 2016, \href{https://arxiv.org/abs/1604.08840}{{\ttfamily arXiv:1604.08840}}.

\bibitem{ar:somo} A. S\o rensen and K. M\o lmer, ``Quantum computation with ions in thermal motion'', Phys. Rev. Lett. {\bf 82}, 1971--1974, 1999, \href{http://arxiv.org/abs/quant-ph/9810039}{{\ttfamily quant-ph/9810039}}.

\bibitem{ar:bbcd} A. Barenco, C. H. Bennett, R. Cleve, D. P. DiVincenzo, N. Margolus, P. Shor, T. Sleator, J. Smolin, and H. Weinfurter, ``Elementary gates for quantum computation'', Phys. Rev. A {\bf 52}, 3457--3467, 1995, \href{http://arxiv.org/abs/quant-ph/9503016}{{\ttfamily quant-ph/9503016}}. 

\bibitem{arXiv:1508.03392} T. R. Tan, J. P. Gaebler, Y. Lin, Y. Wan, R. Bowler, D. Leibfried, and D. J. Wineland, ``Multi-element logic gates for trapped-ion qubits'', Nature {\bf 528}, 380--383, 2015, \href{http://arxiv.org/abs/1508.03392}{{\ttfamily arXiv:1508.03392}}.

\bibitem{ar:sm} V. V. Shende and I. L. Markov, ``On the CNOT-cost of Toffoli gates'', Quantum Information and Computation {\bf 9}(5-6), 461--486, 2009, \href{http://arxiv.org/abs/0803.2316}{{\ttfamily arXiv:0803.2316}}.

\bibitem{ar:mdmn} D. Maslov, G. W. Dueck, D. M. Miller, and C. Negrevergne, ``Quantum circuit simplification and level compaction'', IEEE Trans. CAD {\bf 27}(3):436--444, 2008, \href{http://arxiv.org/abs/quant-ph/0604001}{{\ttfamily quant-ph/0604001}}.

\bibitem{ar:sama} M. Saeedi and I. L. Markov, ``Synthesis and optimization of reversible circuits -- a survey'', ACM Computing Surveys {\bf 45}(2), article 21, 2013, \href{http://arxiv.org/abs/1110.2574}{{\ttfamily arXiv:1110.2574}}.

\bibitem{ar:mdm} D. Maslov, G. W. Dueck, and D. M. Miller, ``Techniques for the synthesis of reversible Toffoli networks'', ACM Trans. on Design Automation of Electronic Systems {\bf 12}(4), article 42, 2007,  \href{http://arxiv.org/abs/quant-ph/0607166}{{\ttfamily quant-ph/0607166}}.

\bibitem{ar:m} D. Maslov, ``Advantages of using relative-phase Toffoli gates with an application to multiple control Toffoli optimization'', Phys. Rev. A {\bf 93}, 022311, 2016,  \href{http://arxiv.org/abs/1508.03273}{{\ttfamily arXiv:1508.03273}}.

\bibitem{ar:pspmh} A. K. Prasad, V. V. Shende, I. L. Markov, J. P. Hayes, and K. N. Patel, ``Data structures and algorithms for simplifying reversible circuits'', ACM Journal on Emerging Technologies in Computing Systems {\bf 2}(4):277--293, 2006.

\bibitem{ar:nhr} V. Nebendahl, H. Haffner, and C. F. Roos, ``Optimal control of entangling operations for trapped ion quantum computing'', Phys. Rev. A {\bf 79}, 012312, 2009, \href{http://arxiv.org/abs/0809.1414}{{\ttfamily arXiv:0809.1414}}.

\bibitem{ar:mmns} E. A. Martinez, T. Monz, D. Nigg, P. Schindler, and R. Blatt, ``Compiling quantum algorithms for architectures with multi-qubit gates'', New J. Phys. {\bf 18}(6), 063029, 2016, \href{https://arxiv.org/abs/1601.06819}{{\ttfamily arXiv:1601.06819}}.

\bibitem{ar:mkh} T. Monz, K. Kim, W. H\"ansel, M. Riebe, A. Villar, P. Schindler, M. Chwalla, M. Hennrich, and R. Blatt, ``Realization of the quantum Toffoli gate with trapped ions'', Phys. Rev. Lett. {\bf 102}, 040501, 2009, \href{http://arxiv.org/abs/0804.0082}{{\ttfamily arXiv:0804.0082}}.

\bibitem{co:g} L. K. Grover, ``A fast quantum mechanical algorithm for database search'', Proceedings of 28th Annual ACM Symposium on the Theory of Computing, pages 212-219, May 1996, \href{http://arxiv.org/abs/quant-ph/9605043}{{\ttfamily quant-ph/9605043}}.

\bibitem{quant-ph/9511007} R. B. Griffiths and C.-S. Niu, ``Semiclassical Fourier transform for quantum computation'', Phys. Rev. Lett. {\bf 76}, 3228--3231, 1996, \href{http://arxiv.org/abs/quant-ph/9511007}{{\ttfamily quant-ph/9511007}}.



\end{thebibliography}

\end{document}